\newcommand{\csomit}[1]{}
\newcommand{\co}[1]{\overline{#1}}
\newcommand{\Axcut}{\m{C}_{\m{Ax}}}
\newcommand{\onebot}{\m{C}_{\one\bot}}
\newcommand{\tensorparr}{\m{C}_{\tensor\parr}}
\newcommand{\opluswith}{\m{C}_{\oplus\with}}
\newcommand{\bangwhynot}{\m{C}{!?}}
\newcommand{\bangw}{\m{C}_{!\m{w}}}
\newcommand{\bangC}{\m{C}_{!\m{C}}}
\newcommand{\existsforall}{\m{C}_{\exists\forall}}
\tikzstyle{wto} = [->>]
\newcommand{\til}[1]{\tilde{#1}}
\newcommand{\wtil}[1]{\widetilde{#1}}
\newcommand{\send}[5]{\co{#1} (#2);(#4\pp #5)}
\newcommand{\gensend}{\send{x}{x'}{y',\til z'}{P}{Q}}
\newcommand{\wait}[2]{\m{wait}[#1];#2}
\newcommand{\genwait}{\wait{y}P}
\newcommand{\close}[1]{\m{close}[#1]}
\newcommand{\genclose}{\close{x}}
\newcommand{\caxcut}[4]{{#1}\leftarrow{#2}\rightarrow{#3}; #4}
\newcommand{\genaxcut}{\caxcut{w}{y^B}{x}{P}}
\newcommand{\recv}[4]{#1(#3);#4}
\newcommand{\genrecv}{\recv{y}{\til x'}{y'}P}
\newcommand{\Case}[3]{#1.\mathsf{case}(#2,#3)}
\newcommand{\gencase}{\Case{y}PQ}
\newcommand{\nocase}[3]{{#1}^{#3}.\m{case}()}
\newcommand{\gennocase}{\nocase{x}{y}{\til u}}
\newcommand{\inl}[2]{#1.\mathsf{inl};#2}
\newcommand{\geninl}{\inl{x}P}
\newcommand{\inr}[2]{#1.\mathsf{inr};#2}
\newcommand{\geninr}{\inr{x}Q}
\newcommand{\genginl}{\comsel{x}{\til y}{\lleft}{P}{Q_1,\dots,Q_n}}
\newcommand{\genginr}{\comsel{x}{\til y}{\lright}{P_1,\dots,P_n}{Q}}
\newcommand{\nil}{\boldsymbol 0}
\newcommand{\res}[1]{(\boldsymbol\nu #1)\,}
\newcommand{\gcpres}[2]{\res{#1 : #2}}
\newcommand{\inlinr}[3]{#1.(\lleft: #2, \lright: #3)}
\newcommand{\chinlinr}[4]{{#1}\mathrel{\rightarrow}{#2}.(\lleft: #3, \lright: #4)}
\newcommand{\geninlinr}{\inlinr{x}{P}{Q}}
\newcommand{\genchinlinr}{\chinlinr{x}{\til y}{P}{Q}}
\newcommand{\srv}[3]{\m{srv}\, {#1};{#3}}
\newcommand{\gensrv}{\srv yxP}
\newcommand{\client}[3]{\m{use}\, {#1};{#3}}
\newcommand{\genclient}{\client x{\til y}P}
\newcommand{\shutdown}[3]{\m{kill}\, {#1} \pp {#3}}
\newcommand{\srvkill}[1]{\m{kill}\, {#1}}
\newcommand{\genshutdown}{\shutdown x{\til y}P}
\newcommand{\clone}[5]{\m{clone}\, {#1}(#3); #5}
\newcommand{\genclone}{\clone x{\til y}{x'}{\til y'}P}
\newcommand{\cloneo}[2]{\m{clone}\, {#1}(#2)}
\newcommand{\gclient}[3]{{#1}\, \m{starts}\, {#2};{#3}}
\newcommand{\gengclient}{\gclient x{\til y}P}
\newcommand{\gshutdown}[3]{{#1}\, \m{kills}\, {#2};{#3}}
\newcommand{\gengshutdown}{\gshutdown x{\wtil{y(Q)}}P}
\newcommand{\gclone}[5]{{#1}\, \m{clones}\, {#2}(#3,#4);#5}
\newcommand{\gengclone}{\gclone x{\til y}{x'}{\til y'}P}
\newcommand{\sendtype}[4]{{#1}[{#3}];#4}
\newcommand{\gensendt}{\sendtype x{\til y}AP}
\newcommand{\recvtype}[4]{{#1}({#3});#4}
\newcommand{\comtype}[5]{#1[#3]\mathrel{\rightarrow}#2(#4);#5}
\newcommand{\gencomt}{\comtype x{\til y}AXP}
\newcommand{\markc}[1]{\underline{\textcolor{dkgreen}{\ensuremath{#1}}}}
\newcommand{\marka}[1]{\fbox{\textcolor{dkred}{\ensuremath{#1}}}}
\newcommand{\markb}\hl
\definecolor{dkred}{rgb}{.6,0,0}
\definecolor{dkgreen}{rgb}{0,.6,0}
\definecolor{dkblue}{rgb}{0,0,1}
\newcommand{\com}[5]{#1(#2)\mathrel{\rightarrow}#3(#4);#5}
\newcommand{\gencom}{\com{\til x}{\til x'}{y}{y'}{P}}
\newcommand{\comclose}[3]{#1\ \m{closes}\ #2;#3}
\newcommand{\gencomclose}{\comclose{\til x}yP}
\newcommand{\tensor}{\otimes}
\newcommand{\tensort}[1]{\mathbin{\tensor}}
\newcommand{\parr}{\,\bindnasrepma\,}
\newcommand{\parrt}[1]{\mathbin{\parr}}
\newcommand{\one}{{\bf 1}}
\newcommand{\onet}[1]{\one}
\newcommand{\bott}[1]{\bot}
\newcommand{\with}{{\,\&\,}}
\newcommand{\oplust}[1]{\mathbin{\oplus}}
\newcommand{\witht}[1]{\mathbin{\with}}
\newcommand{\bangt}[1]{{\bang}}
\newcommand{\whynott}[1]{\quest}
\newcommand{\existst}[1]{\exists}
\newcommand{\forallt}[1]{\forall}
\newcommand{\zerot}[1]{0}
\newcommand{\topt}[1]{\top}
\newcommand{\seq}{{\vdash}}
\newcommand{\cseq}{\mathrel{
\scalebox{0.8}{\raisebox{-0.5mm}{$\stackrel{\circ}{\scalebox{0.8}{\makebox[0mm][c]{$\circ$}}}$}}}
}
\newcommand{\gseq}{\mathrel{\vDash}}
\newcommand{\pp}{\ \boldsymbol{|}\ }
\newcommand{\pair}[2]{#1\!:\!#2}
\newcommand{\m}[1]{\ensuremath{\mathsf{#1}}}
\definecolor{light-gray}{gray}{0.948}
\newcommand{\hl}[1]{\colorbox{light-gray}{$#1$}}
\definecolor{cool}{gray}{0.95}
\newcommand{\judge}[3]{#1\ \Vdash\ {#2} \ \cseq\ #3}
\newcommand{\gclose}[2]{#1 \mathrel{\rightarrow}#2}
\newcommand{\gcom}[3]{#1 \mathrel{\rightarrow}#2(#3)}
\newcommand{\gsel}[4]{#1 \mathrel{\rightarrow}#2.\m{case}(#3,#4)}
\newcommand{\gbang}[3]{{!}\gcom{#1}{#2}{#3}}
\newcommand{\gexistsforall}[4]{#1 \mathrel{\rightarrow}#2.(#3)#4}
\newcommand{\gabsurd}{\topzero}
\newcommand{\comsel}[5]{#1 \mathrel{\rightarrow} #2.#3(#4; #5)}
\newcommand{\lleft}{\m{inl}}
\newcommand{\lright}{\m{inr}}
\newcommand{\topzero}[2]{\emptyCase{\gfromto{#1}{#2}}}
\newcommand{\gto}{\to}
\newcommand{\gfromto}[2]{#1 \gto #2}
\newcommand{\Casebase}[2]{{{#1}.\m{case}(#2)}}
\newcommand{\emptyCase}[1]{{\Casebase{#1}{}}}
\newcommand{\axiom}[4]{{#1}^{#2} \gto {#3}}
\newcommand{\bang}{\boldsymbol{!}}
\newcommand{\quest}{\boldsymbol{?}}
\newcommand{\dual}[1]{#1^\perp}
\newcommand{\globalaxiom}[4]{{#1}^{#2} \to {#3}}
\newcommand{\globalaxiomp}[4]{\globalaxiom{#1}{#2}{#3}{#4}}
\newcommand{\phl}{\hl}
\newcommand{\reducesto}{\longrightarrow}
\newcommand{\projextr}{\rightleftharpoons}
\newcounter{ncomm}
\newcommand{\queryt}[1]{\mathord{?}}
\newcommand{\mypar}[1]{\noindent{\bf #1.}}
\begin{document}
\pagestyle{plain}

\title{Multiparty Classical Choreographies}

\author{Marco Carbone\inst{1}
  \and Lu\'\i s Cruz-Filipe\inst{2}
  \and Fabrizio Montesi\inst{2}
  \and Agata Murawska\inst{1}
}
\institute{
  \begin{tabular}{cc}
    $^1$IT University of Copenhagen
    & 
    \qquad
    $^2$University of Southern Denmark
    \\
    \email{\{carbonem,agmu\}@itu.dk}
    &
    \qquad
    \email{\{lcf,fmontesi\}@imada.sdu.dk}
    \\[-2.9mm]
  \end{tabular}
}

\maketitle

\begin{abstract}
We present Multiparty Classical Choreographies (MCC), a language model
where global descriptions of communicating systems (choreographies)
implement typed multiparty sessions. Typing is achieved by
generalising classical linear logic to judgements that explicitly
record parallelism by means of hypersequents. Our approach unifies
different lines of work on choreographies and processes with
multiparty sessions, as well as their connection to linear
logic. Thus, results developed in one context are carried over to the
others. Key novelties of MCC include support for server invocation in
choreographies, as well as logic-driven compilation of choreographies
with replicated processes.
%
%

\end{abstract}

\section{Introduction}\label{sec:intro}
Choreographic Programming \cite{M13:phd} is a programming paradigm
where programs, called {\em choreographies}, define the intended
communication behaviour of a system based on message passing, using an
``Alice and Bob'' notation, rather than the behaviour of each
endpoint.  Choreographies are useful for several reasons: they give a
succinct description, or {\em blueprint}, of the intended behaviour of
a whole system, making the implementation less error-prone. Then,
correct-by-construction distributed implementations can be synthesised
automatically by means of {\em projection}, a compilation algorithm
that generates the code for each endpoint described in the
choreography \cite{CHY12,CM13}. Reversely, it is often possible to
obtain a choreography from an endpoint implementation by means of {\em
  extraction}, providing a precise blueprint of a distributed system.

Choreographic programming has a deep relationship with the proof
theory of linear logic \cite{CMS18}. Specifically, choreographic
programs can be seen as terms describing the reduction steps of cut
elimination in linear logic (choreographies as cut reductions).  The
key advantage of this result is that it provides a logical
reconstruction of two useful translations, one from choreographies to
processes ({\em projection}, or synthesis) and another from processes
to choreographies (\emph{extraction}) -- this is obtained by
exploiting 
the correspondence between intuitionistic linear logic and a variant
of the $\pi$-calculus \cite{CP10}.  These translations can be used to
keep process implementations aligned with the desired communication
flows given as choreographies, whenever code changes are applied to
any of the two.  This kind of alignment is a desirable property in
practice, e.g., it is the basis of the Testable Architecture
development lifecycle for web services \cite{savara:website}.

Unfortunately, the logical reconstruction of choreographies in
\cite{CMS18} covers only the multiplicative-additive fragment of
intuitionistic linear logic, limiting its practical applicability to
simple scenarios. The aim of this paper is to push the boundaries of
this approach 
towards more realistic scenarios with sophisticated
features. In this article, we define a model, strictly related to
classical linear logic, that allows for {\em replicated
  services}, 
and  {\em multiparty sessions}.

Reaching our aim is challenging for both design and technical
reasons. 
In the multiplicative-additive fragment of linear logic considered in
\cite{CMS18}, all reductions intuitively match choreographic terms
explored in previous works on choreographies, i.e., communication of a
channel and branch selection \cite{CM13}. This is not the case for the
exponential fragment, which yields {\em reductions never considered
  before in choreographies}, e.g., explicit garbage collection of
services and server cloning (see \textsf{kill} and \textsf{clone}
operations).
To bridge this gap, we exploit the fact that these operations occur
naturally in the process language and, through the logic, can be
reflected to choreographic primitives for management of services as
explicit resources that can be duplicated, used, or destroyed.
%
We show that the reductions for these terms correspond to the
principal cut reductions for exponentials in classical linear logic.
Typing guarantees that resource management is safe, e.g., no destroyed
resource is ever used again.  

In \cite{CMS18}, all sessions (protocols) have exactly two
participants. This works well in intuitionistic linear logic, where
sequents are two-sided: two processes can be connected if one
``provides'' a behaviour and the other ``needs'' it. This is verified
by checking identity of types, respectively between a type on the
right-hand side of the sequent of the first process and a type of the
left-hand side of the sequent for the second.  To date, it is still
unclear how identity for two-sided sequents can be generalised to
multiparty sessions, where a session can have multiple participants
and thus we need to check compatibility of multiple types.  Instead,
this topic has been investigated in the setting of classical linear
logic, where multiparty compatibility is captured by coherence, a
generalisation of duality \cite{CMSY17}.  Therefore, our formulation
of Multiparty Classical Choreographies (MCC) is based on classical
linear logic.  In order to bridge choreographies to multiparty
sessions, we introduce a new session environment, which records the
types of multiparty communications performed by a choreography as
\emph{global types} \cite{HYC16}.  The manipulation of the session
environment reveals that typing a choreography with multiparty
sessions corresponds to \emph{building the coherence proofs for typing
  its sessions}.  Since a proof of coherence is the type compatibility
check required by the multiparty version of cut in classical linear
logic, our result generalises the choreographies as cut reductions
approach to the multiparty case as one would expect, providing further
evidence of the robustness of this idea.
The final result of our efforts is an expressive calculus for
programming choreographies with multiparty sessions and services,
which supports both projection and extraction operations \emph{for all
  typable programs}.

\section{Preview}\label{sec:preview}
We start by introducing MCC informally, focusing on modelling a
protocol inspired by OpenID~\cite{openid}, where a client
authenticates through a third-party identity provider. MCC offers a
way of specifying protocols in terms of global types. For example, our
variant of 
OpenID 
can be specified by the global type $G$: 
\begin{displaymath}\small
  \begin{array}{llll}
    \gcom u {rp}{\mathsf{String}}; 
    \gcom u {ip}{\mathsf{String}}; 
    \gcom u {ip}{\mathsf{PWD}}; 
    \gsel {ip}{rp}{\;\gcom u{rp}{\mathsf{String}};G_1}{\ G_2}
  \end{array}
\end{displaymath}
This protocol concerns three endpoints (often called roles in
literature) denoted by $u$ (user), $rp$ (relaying party) and $ip$
(identity provider). The user starts by sending its login string to
both $rp$ and $ip$. Then, it sends its password to $ip$ which will
either confirm or reject $u$'s authentication to $rp$. If the
authentication is successful then the user will send an evaluation of
the authentication service to $rp$, and then complete as the
unspecified protocol $G_1$. Otherwise, if the password is wrong, then
the protocol continues as $G_2$.  The specification given by the
global type $G$ can be used by a programmer during an implementation.
In MCC, we could give an implementation in terms of the choreography:
\begin{displaymath}\small
  \begin{array}{l@{\qquad\qquad} ll}
    \gclient u{rp,ip}{} & \texttt{// $u$ starts protocol with rp and ip}\\
    \com{u}{user_u}{rp}{user_{rp}}{
    } & \texttt{// $u$ sends its login to $rp$}\\ 
    \com{u}{login_{u}}{ip}{login_{ip}}{} & \texttt{// $u$ sends its login to $ip$}\\ 
    u(pwd_{u})\ \rightarrow\ ip(pwd_{ip}); & \texttt{// $u$ authenticates with $ip$}\\[1mm]
    \multicolumn{2}{l}{
    ip\ \rightarrow rp.
    \left\{
    \begin{array}{lll}\!\!
      \begin{array}{ll@{\qquad}ll}
        \textsf{inl}:
        & \gclient {u'}s{} & \texttt{// $u'$ starts protocol with $s$}\\
        & \com{u'}{rep_{u'}}{s}{rep_{s}}{}& \texttt{// $u'$ sends report to $s$}\\
        & \com{s}{ack_s}{u'}{ack_{u'}}{}& \texttt{// $s$ acknowledges to $u'$}\\
        & \com{u}{rep_{u}}{rp}{rep_{rp}}{P},  & \texttt{// $u$ sends report to $rp$}\\
        \textsf{inr}: & Q & \texttt{// authentication fails}
      \end{array}
    \end{array}
    \right\}
    }
  \end{array}
\end{displaymath}
Each line is commented with an explanation of the performed action. We
observe that two different protocols are started. The first line
starts the OpenID protocol between $u$, $rp$ and $ip$ described
above. Moreover, after branching, the choreography starts another
session between the user (named $u'$) and a server $s$ that is used
for reviewing the authentication service given by $ip$. In this case,
the protocol used is
$G'=\gcom{u'}{s}{\mathsf{String}}; \gcom s{u'}{\mathsf{String}}; G_3$,
for some unspecified $G_3$.  We leave undefined the case in which the
identity provider receives a wrong password (term $Q$).

In this work, we show how a choreography that follows a protocol such
as $G$ can be expressed as a proof in a proof theory strictly related
to classical linear logic. Moreover, thanks to proof transformations,
the choreography above can be projected into a parallel composition of
endpoint processes, each running a different endpoint. As an example,
the endpoint process for the user would correspond to the process
$P_u$, defined as
\begin{displaymath}\small
  \begin{array}{l}
    \client {u}{rp,ip}{}\co{u} (user_u); \co{u} (login_u);
    \co{u} (pwd_u);
    \client {u'}{s}{}\co{u'} (rep_{u'}); u'(ack_{u'});
    \co{u} (rep_{u}); R
  \end{array}
\end{displaymath}
which mimics the behaviour of $u$ and $u'$ specified in the
choreography. Operator $\m{use}$ is used to start a session, while the
other two operators utilised above are for in-session
communication. Similarly, we can have the endpoint processes for $rp$,
$ip$ and $s$: 
\begin{displaymath}\small
  \begin{array}{c}
    P_{rp}=\m{srv}\, {rp}; rp (user_{rp}); rp.\mathsf{case}(rp (rep_{rp}); R_1,Q_1)
    \\
    P_{ip}=\m{srv}\, {ip}; ip (login_{ip});
    ip (pwd_{ip}); R_2
    \qquad
    P_{s}= \m{srv}\, {s}; s(rep_{s}); \co{s}(ack_{s}); R_3
  \end{array}
\end{displaymath}

\section{GCP with Hypersequents}\label{sec:gcp}
%
In this section, we present the {\em action fragment} of MCC, where we
only consider local actions, e.g., inputs or outputs. The action
fragment is a variant of Globally-governed Classical Processes
(GCP)~\cite{CLMSW16} whose typing rules use
hypersequents. 
In the remainder, we denote a vector of endpoints $x_1,\ldots,x_n$ as
$\tilde x$ or $(x_i)_i$.
\paragraph{Syntax.} The action fragment is a generalisation of
Classical Processes~\cite{W14} that supports multiparty session
types. As hinted in \S\ref{sec:preview}, when writing a program in our
language, we do not identify sessions via channel names, but rather we
name sessions' \emph{endpoints}. Each process owns a single endpoint
of a session it participates in. The complete syntax is given by the
following grammar:
\begin{displaymath}\scriptsize
  \begin{array}{rll@{\qquad\qquad}llllllllll}
    P ::= & \rule{1.4mm}{0mm}
    \hl{\axiom{x}{A}{y}{B}} & \text{link} 
    & \mid \hl{\genclient} & \text{client}
    \\[1mm]
    & \mid \hl{ P\pp Q}     & \text{parallel}
    & \mid \hl{\gensrv}& \text{server}
    \\[1mm]
    & \mid \hl{\res {{\til x}:G} P} & \text{restriction}
    & \mid \hl{\genshutdown} & \text{server kill}
    \\[1mm]
    & \mid \hl{\gensend} & \text{send}
    & \mid \hl{\genclone}    & \text{server clone}
    \\[1mm]
    & \mid \hl{\genrecv} & \text{receive}
    \\[1mm]
    & \mid \hl{\genclose} & \text{close session}
    \\[1mm]
    & \mid \hl{\genwait} & \text{receive close}
    & \mid \hl{\gencase} & \text{branching}
    \\[1mm]
    & \mid \hl{\geninl} & \text{left selection}
    & \mid \hl{\geninlinr} & \text{general selection}
    \\[1mm]
    & \mid \hl{\geninr} & \text{right selection}
    & \mid \hl{\gennocase} & \text{empty choice}
  \end{array}
\end{displaymath}
With a few exceptions, the terms above are identical to those of
GCP. {\em For space restriction reasons, we only discuss the key
  differences.} Parallel and restriction constructs form a single term
$(\nu \tilde x:G)(P\pp Q)$ in the original GCP. The link process
$\axiom{x}{A}{y}{B}$ is a forwarder from $x$ to $y$. We further allow
the general selection $\geninlinr$, denoting a process that
non-deterministically selects a left or a right branch.
For services, an endpoint $x$ may kill all servers by executing the
action $\genshutdown$, or duplicate them by means of $\genclone$ --
these operations were silent in the original 
GCP.
In cloning, the new server copies are replicated at fresh endpoints,
ready to engage in a session with new endpoint $x'$. More generally,
we follow the convention of~\cite{W14}, denoting the result of
refreshing names in $Q$ by $Q'$ (changing each $x\in\m{fv}(Q)$ into a
fresh $x'$).

\paragraph{Types.} Types, used to ensure proper behaviour of
endpoints, are defined as:
\begin{displaymath}\scriptsize
  \begin{array}{rllll@{\qquad}lllllll}
    A ::= & \rule{1.5mm}{0mm}\hl{A\tensor B} & \text{output} 
    & \mid \hl{A\parr B} & \text{input} 
    & G ::= & \rule{1.5mm}{0mm}\hl{\gcom {\til x}{y}{G};H} & (\tensor\parr)\\[1mm]
          & \mid \hl{A\oplus B} & \text{selection} 
    & \mid \hl{A\with B} & \text{choice}
    && \mid \hl{\gsel x{\til y}{G}{H}} & (\oplus\with) \\[1mm]
          & \mid \hl{!A} & \text{server} 
    & \mid \hl{?A}       &\text{client} 
    && \mid \hl{\gbang{x}{\til y}{G}}  & (!?) \\[1mm]
          & \mid \hl{\one} & \text{close} 
    & \mid \hl{\bot}       & \text{wait} 
    && \mid \hl{\gclose{\til x}{y}} & (\one\bot) \\[1mm]
          & \mid \hl{0} & \text{false} 
    & \mid \hl{\top}    & \text{empty} 
    && \mid \hl{\gabsurd{x}{\til y}} & (0\top) \\[1mm]
          & \mid \hl{X} & \text{variable} 
    & \mid \hl{X^\bot} & \text{dual variable} 
    && \mid \hl{\globalaxiom{x}{A}{y}{B}} & (\textsc{axiom})
  \end{array}
\end{displaymath}
In the multiparty setting, types can be split into \emph{local types}
$A$, which specify behaviours of a single process, and \emph{global
  types} $G$, which describe interaction within sessions (and
choreography actions). Again, most global types correspond to pairs of
local types, the exception being the global axiom type, describing a
linking session (restricted by typing to type variables and their
duals).  Local type operators are based on connectives from classical
linear logic -- thus, ${A\tensor B}$ is the type of a process that
outputs an endpoint of type $A$ and continues with type $B$, whereas
${A\parr B}$ is the type of a process that receives endpoints of type
$A$ and is itself ready to continue as $B$. The corresponding global
type ${\gcom {\til x}{y}{G};H}$ types the interaction where each of
the processes owning an endpoint $x_i$ sends their new endpoint to
$y$. 
Type $0$ is justified by the necessity of having a type dual to
$\top$, while the rule $0\top$ is essential for the definition of
coherence.  Type variables are used to represent concrete datatypes.
It is worth noting that the logic formulas in our type system enjoy
the usual notion of duality, where a formula's dual is obtained by
recursively replacing each connective by the other one in the same row
in the table above. For example, the dual of $!(A\tensor0)$ is
$?(A^\bot\parr\top)$, where $A^\bot$ is the dual of formula $A$.

\paragraph{Typing.}
We type our terms in judgements of the form $\judge\Sigma P\Psi$,
where:
(i) $\Sigma$ is a set of session typings of the form
$\pair{(x_i)_i}{G}$;
(ii) $P$ is a process;
and,
(iii) $\Psi$ is a hypersequent, a set of classical linear logic
sequents.
Intuitively, ${\judge\Sigma P\Psi}$ reads as {\em ``$\Psi$ types $P$
  under the session protocols described in $\Sigma$.''}

\noindent Given a judgment
$\judge \Sigma P {\Psi \pp \vdash \Gamma, x : A}$, checking whether
$x$ is \emph{available} -- not engaged in a session -- is implicitly
done by verifying that $x$ does not occur in the domain of
$\Sigma$. Note that names cannot occur more than once in $\Sigma$:
each endpoint $x$ may only belong to (at most) one session
$G$. Hypersequents $\Psi_1, \Psi_2$ and sets of sessions
$\Sigma_1, \Sigma_2$ can only be joined if their domains do not
intersect. Moreover, we use indexing in different ways:
$\big(\judge{\Sigma_i}{P_i}{\Psi_i}\big)_i$ denotes several judgements
$\judge{\Sigma_1}{P_1}{\Psi_1}$, \ldots,
$\judge{\Sigma_n}{P_n}{\Psi_n}$; indexed pairs $(\pair {x_i}{A_i})_i$
are a set of pairs $\pair {x_1}{A_1}$, \ldots, $\pair {x_n}{A_n}$;
and, finally, $(\vdash \Gamma_i)_i$ denotes the hypersequent
$\vdash \Gamma_1\ |\ \ldots\ |\ \vdash\Gamma_n$.

In order to separate restriction and parallel (reasons for this
separation will be explained in \S~\ref{sec:choreo}), we split the
classical linear logic $\m{Cut}$ rule into
two: 
{\scriptsize
\begin{align*}
 &\infer[\m{Conn}]
      {
        \judge{(\Sigma_i)_i, \pair{(x_i)_i}{G}}{(P_i)_i}{\left(\Psi_i \pp \seq \Gamma_i, \pair{x_i}{A_i} 
          \right)_i}
      }
      {
        \left(
        \judge{\Sigma_i}{P_i}{\Psi_i \pp \seq \Gamma_i, \pair{x_i}{A_i}}
        \right)_i
        \!\!&\!\!
        G \gseq (\pair{x_i}{A_i})_i
      }
  &\,\, \!\!\!\!
    &\infer[\m{Scope}]
      {\judge{\Sigma}{\res {{\til x}:G} P}{\Psi \pp \seq \left( \Gamma_i \right)_i}}
      {\judge{\Sigma, \pair{(x_i)_i}{G}}{P}{\Psi \pp \left(\seq \Gamma_i, \pair{x_i}{A_i} \right)_i}}
\end{align*}        
}%
Rule $\m{Conn}$ is used for merging proofs that provide {\em coherent}
types (we address coherence below), but without removing them from the
environment. 
Since such types need to remain in the conclusion of the rule, we need
to use hypersequents. 
%
The sequents involved in a session get merged once a $\m{Scope}$ rule
is applied. This hypersequent presentation is similar to a classical
linear logic variant of \cite{CMS18} with sessions explicitly
remembered in a separate context $\Sigma$. 

%
\begin{figure*}[t]
  \begin{eqnarray*}
    &\infer[\tensor\parr]
      {
      \hl{\gcom {\til x}{y}{G};H}
      \ \ \gseq\ \
      \Gamma, \
      (\pair{ {x_i}}{ {A_i \tensort{y} B_i}})_i,\ 
      \pair{ y}{{C \parrt{\til x} D}}
      }
      {
      \hl G 
      \ \gseq\
      ( \pair {{x_i}}{{A_i}} )_i,\ 
      \pair{y}{C}
      &
        \hl {H}
        \ \gseq\ \Gamma,\ 
        ( \pair {{x_i}}{{B_i}} )_i,\ 
        \pair{y}{D}
        }
        \quad
        \infer[\one\bot]
        {
        \hl{\gclose{\til x}{y}}
        \gseq
        ( \pair {{x_i}}{{\onet y}} )_i,
        \pair {y} {{\bott{\til x}}}
        }
        {
        }
      \\[1ex]
      &\infer[\oplus\with]
      {
      \hl{\gsel x{\til y}{G_1}{G_2}}
      \gseq \Gamma, \pair{x}{{A\oplust{\til y} B}},
      (\pair{\hl {y_i}}{{C_i \witht x D_i}})_i
      }
      {
      \hl{G_1} \gseq \Gamma, \pair {x}{A}, (\pair {{y_i}}{{C_i}})_i
      &
        \hl{G_2} \gseq\Gamma, \pair {x}{B}, (\pair {\hl{y_i}}{{D_i}})_i
        }
        \quad
        \infer[!?]
        {
        \hl{\gbang{x}{\til y}{G}}
        \gseq
        \pair {\hl x}{{\queryt{\til y}A}},
        (\pair{\hl{y_i}}{{\bangt x B_i}})_i
        }
        {
        \hl G
        \gseq
        \pair {\hl x}{A},
        (\pair{\hl{y_i}}{{B_i}})_i
        }
      \\[1ex]
      &\infer[0\top]
      {
      \hl{\gabsurd{x}{\til y}} \gseq
      \hl\Gamma, \pair {x}{{\zerot{\til y}}}, ( \pair {{y_i}}{{\topt{x}}} )_i
      }
      {}
      \qquad
          \infer[\textsc{Axiom}]
      {\hl{\globalaxiom{x}{A}{y}{A^\bot}} \gseq x: {A}, y:{A^\bot}}
      {
      A^\top = X \mbox{ or } A = X^\bot 
      }
  \end{eqnarray*}
  \caption{Coherence rules.}
  \label{fig:coherence}
\end{figure*}
%

%
{\em Coherence} is a generalisation of duality~\cite{CLMSW16} to more
than two parties: when describing a multiparty session, simple duality
of types does not suffice to talk about their compatibility. 
In Fig.~\ref{fig:coherence}, we report the rules defining the
coherence relation $\gseq$. We do not describe these here in detail,
as they remain unchanged compared to the original GCP presentation,
with the exception of the axiom rule which is only applicable to
atomic types in our system.

%
%
\begin{figure}[t]
  \[
\begin{array}{c}
\infer[\m{Ax}]
    {\judge{\cdot}{\axiom{x}{A}{y}{A^\bot}}{\seq \pair xA, \pair y{A^\bot}}}
    { A = X \mbox{ or } A = X^\bot } \quad
  \infer[\tensor]
    {\judge{\Sigma_1,\Sigma_2}{\gensend}
            {\Psi_1\pp \Psi_2\pp \seq {\Gamma_1,\Gamma_2,\pair x {A\tensort y B}}}}
    { \judge{\Sigma_1} P {\Psi_1\pp\seq {\Gamma_1, \pair{x'}A}}
      &
      \judge{\Sigma_2} Q {\Psi_2\pp\seq {\Gamma_2,\pair xB}}
    }\\[2mm]
    \infer[\parr]
          {\judge{\Sigma}{\genrecv} { \Psi\pp \seq {\Gamma, \pair y{A\parrt{\til x} B}} } }
          {\judge{\Sigma}P {\Psi\pp \seq {\Gamma, \pair{y'}A, \pair yB} }}
    \quad
    \infer[\bot]
          {\judge{\Sigma}{\genwait}
                  {\Psi\pp \seq \Gamma, \pair y {\bott {\til x}}}}
          {\judge \Sigma {P} { \Psi\pp \seq \Gamma } }
    \\[2mm]
    \infer[\one]
           {\judge{\cdot}{\genclose} {\seq \pair x {\onet y}}}
           {}
    \quad
    (\text{no rule for } \nil)
    \qquad
    \infer[\top]
          {
            \judge \cdot {\gennocase} {\seq \Gamma, \pair x{\top}}
          }
          {\m{vars}(\Gamma) = \til u}\\[2mm]
    \infer[\oplus_1]
        {
          \judge{\Sigma}{\geninl}
                {\Psi \pp \seq \Gamma, \pair {x}{A\oplust{\til y} B}}
        }
        {
          \judge \Sigma P 
                 {\Psi \pp \seq \Gamma, \pair {x}{A}}
        }
    \qquad
    \infer[\oplus_2]
        {
          \judge\Sigma{\geninr} 
                {\Psi \pp \seq \Gamma, \pair {x}{A\oplust{\til y} B}}
        }
        {
          \judge\Sigma Q
                 {\Psi \pp \seq \Gamma, \pair {x}{B}}
        }
    \\[2mm]
    \infer[\oplus]
        {\judge{\Sigma}{\inlinr{x}{P}{Q}}{\Psi \pp \seq \Gamma, \pair {x}{A\oplus B}}}
        {\judge{\Sigma}{P}{\Psi \pp \seq \Gamma, \pair x A} &
          \judge{\Sigma}{Q}{\Psi \pp \seq \Gamma, \pair x B}}
        \quad
        \infer[!]
      {
        \judge \cdot {\gensrv}
               { \seq {?\Gamma}, \pair {y}{\bang A} }
      }
      {
        \judge \cdot {P}
               { \seq {?\Gamma}, \pair {y}{A} }
      }\\[2mm]
    \infer[\with]
      {
        \judge{\Sigma}{\gencase} 
              {\Psi \pp \seq {\Gamma, \pair {y}{A\witht{x} B}}}
      }
      {
        \judge\Sigma P {\Psi \pp  \seq \Gamma, \pair {y}{A}}
        &
        \judge\Sigma Q {\Psi \pp \seq \Gamma, \pair {y}{B}}
      }    \quad
    \infer[?]
          {
            \judge \Sigma {\genclient}
                   { \Psi\pp \seq \Gamma, \pair {x}{\quest A} }
          }
          {
            \judge \Sigma {P}
                   { \Psi\pp \seq \Gamma, \pair {x}{A} }
          }
    \\[2mm]
    \infer[\m{Weaken}]
      {
        \judge \Sigma {\genshutdown}
               { \Psi\pp \seq \Gamma, \pair {x}{{\quest A}} }
      }
      {
        \judge \Sigma {P} { \Psi\pp \seq \Gamma }
      }
    \qquad
    \infer[\m{Contract}]
          {
            \judge \Sigma {\genclone}
                   { \Psi\pp \seq \Gamma, \pair {x}{{\quest A}}}
          }
          {
            \judge \Sigma P
                   { \Psi \pp \seq \Gamma, \pair {x}{{\quest A}}, \pair {x'}{{\quest A}}}
          }
\end{array}\]
\caption{Rules for the action fragment.}
\label{fig:mcc_action}
\end{figure}
The remaining typing rules for the action fragment, presented in
Fig.~\ref{fig:mcc_action} are identical to those of GCP with the
exception that a context in GCP may be distributed among several
sequents here. For example, rule $\tensor$ takes two sequents
$\vdash\Gamma_1, \pair{x'}A$ and $\vdash\Gamma_2, \pair{x}B$ from two
different hypersequents, and merges them into
$\vdash\Gamma_1,\Gamma_2, \pair{x}A\tensor B$, as in classical linear
logic. However, elements of $\Gamma_1$ and $\Gamma_2$ may be connected
through $\Sigma_1$ and $\Sigma_2$ to other parts of $\Psi_1$ and
$\Psi_2$ respectively (as a result of previously applied \m{Conn}).
Note that the rules of this fragment work only with processes not
engaged in any session, since the endpoints explicitly mentioned in
proof terms cannot occur in the domain of $\Sigma$:
this is an implicit check in all rules of Fig.~\ref{fig:mcc_action}.
Rule $\top$ introduces a single sequent $\Gamma,x:\top$, allowing for
any $\Gamma$. The proof term $\gennocase$ keeps track of the endpoints
introduced in $\Gamma$: it ensures that all endpoints in the typing
are mentioned in the proof term, which is useful when defining
semantics. In this article, we restrict the axiom to only type
variables (see \S\ref{sec:related}).
%

\paragraph{Semantics.}
\begin{figure}
  \[\begin{array}{l  @{\qquad}l} 
      \begin{array}{ll}
        \hl{(\til  P \pp Q) \pp \til S}
        \equiv \hl{\til P \pp (Q \pp \til S)}&\\[1mm]
        \hl {(\gensend) \pp \til S}
        \equiv \hl {\send{x}{x'}{y',\til z'}{(P\pp \til S)}{Q}} &\\[1mm]
        \hl {(\gensend) \pp \til S}
        \equiv \hl {\send{x}{x'}{y',\til z'}{P}{(Q \pp \til S)}} & \\[1mm]
        \hl{\genrecv \pp \til Q}
        \equiv \hl{\recv {y}{\til x'}{y'}{(P \pp \til Q)}} &\\[1mm]
        \hl{\genwait \pp \til Q}
        \equiv \hl{\wait{y}{(P \pp \til Q)}} & \\[1mm]
        \hl{\geninl \pp \til Q}
        \equiv \hl{\inl{x}{(P \pp \til Q)} } & \\[1mm]
        \hl{\geninr \pp \til Q}
        \equiv \hl{\inr{x}{(P \pp \til Q)} } & \\[1mm]
        \hl{\geninlinr \pp \til S} 
        \equiv\\
        \qquad \qquad\qquad\hl{\inlinr{x}{P\pp\til S}{Q\pp\til S}}\\[1mm]
        \hl{\gencase \pp \til S}
        \equiv \hl{\Case {y}{P \pp \til S}{Q \pp \til S}} & \\[1mm]
        \hl{\genclient \pp \til Q}
        \equiv \hl{\client x{\til y}{(P \pp \til Q)}} & \\[1mm]
        \hl{\genshutdown \pp \til Q}
        \equiv \hl{\shutdown x{\til y}{(P \pp \til Q)}} & \\[1mm]
        \hl{\genclone \pp \til Q}
        \equiv \hl{\clone x{\til y}{x'}{\til y'}{(P \pp \til Q)}} 
      \end{array}
      \ \ 
      \begin{array}{ll}
        \hl{\res {{\til x}:\!G} (P \pp \til Q)} \equiv \hl{\res {{\til
        x}:G} P \pp \til Q} &\\[1mm]
        \hl{\res {{\til x}:\!G} \res {{\til y}:\!H} P} \equiv \hl{\res
        {{\til y}:\!H} \res {{\til x}:\!G} P} &\\[1mm]
        \hl {\res {{\til w}:\! G} (\gensend)} \equiv\\[1mm]
        \qquad\qquad \hl
        {\send{x}{x'}{y',\til z'}{ \res{{\til w}:\! G} P}{Q}}
        \quad \big(\exists i . w_i \in \m{fv}(P)\big) \\[1mm]
        \hl {\res {{\til w}:\! G} (\gensend)} \equiv \\[1mm]
        \qquad\qquad \hl{\send{x}{x'}{y',\til z'}{P}{\res{{\til w}:\! G} Q}}
        \quad \big(\exists i. w_i \in \m{fv}(Q)\big)\\[1mm]
        \hl {\res {{\til w}:\!G} (\genrecv)} \equiv \hl {\recv
        {y}{\til x'}{y'}{\res {{\til w}:\!G} P}} \\[1mm]
        \hl {\res {{\til w}:\!G} (\geninl)} \equiv \hl {\inl{x}{\res {{\til
        w}:\!G} P}} \\[1mm]
        \hl {\res {{\til w}:\!G} (\geninr)} \equiv \hl {\inr{x}{\res
        {{\til w}:\!G} Q}} \\[1mm]
        \hl{\res{{\til w}:\!G} (\geninlinr)}
        \equiv\\
        \qquad\qquad\hl{\inlinr{x}{\res{{\til w}:\!G} P}{\res{{\til w}:\!G}
        Q}}\\[1mm]
        \hl {\res {{\til w}:\!G} (\gencase)} \equiv \\
        \qquad\qquad\hl {\Case
        {y}{\res {{\til w}:\!G}P}{\res {{\til w}:\!G}Q}} & \\[1mm]
   \hl {\res {{\til w}:\!G}(\genclient)} \equiv \hl {\client x {\til y} {\res {{\til w}:\!G}P}} & \\[1mm]
   \hl {\res {{\til w}:\!G} (\genshutdown)} \equiv \hl {\shutdown x {\til y} {\res {{\til w}:\!G}P}} & \\[1mm]
   \hl {\res {{\til w}:\!G} (\genclone)} \equiv \hl {\clone x {\til y}{x'}{\til y'} {\res {{\til w}:\!G}P}} &
      \end{array}\\
      \multicolumn{2}{l}{\qquad\qquad\qquad\qquad\qquad\hl{\res{{\til
      x}:\!G}(\gensrv \pp \til Q)} \equiv \hl{ \srv{y}{}{\res{{\til
      x}:\!G}(P \pp \til Q)}}}\\[1mm]
      \multicolumn{2}{l}{\qquad\qquad\qquad\qquad\qquad\hl{\res{{\til
      z z}:\!G}{(\nocase{x}{}{\til u, z}\pp \til Q)}} \equiv
      \hl{\nocase{x}{}{\til u, \til v}} \qquad \qquad \text{where
      }\til v = \m{vars}(\til Q) \setminus \til z}
    \end{array}\]
  \caption{Equivalences for commuting the action fragment with \m{Conn} and \m{Scope}. All rules
    assume that both sides of the equation are typable in the same context.}
  \label{fig:commconv_action}
\end{figure}

The semantics of the action fragment is almost identical to that of
standard GCP.  It is obtained from cases of the proof of cut
elimination: the principal cases 
describe reductions ($\reducesto$), while the permutations of rule
applications give rise to the rules for structural equivalence
($\equiv$), reported in Fig.~\ref{fig:commconv_action}. Note that as
we are interested only in commuting conversions of typable programs,
there are certain cases where the correct equivalence can be found
only by looking at the typing derivation which contains information
that is not part of the process term.  Under $\equiv$, parallel
distributes safely over $\m{case}$ (because only the actions of one
branch are going to be executed). A similar mechanism can be found in
the original presentation of Classical Processes \cite{W14}, and was
later demonstrated to correspond to a bisimulation law in \cite{A17}.
%
\begin{figure*}[!bt] 
  \[
    \begin{array}{lrl}
      \multicolumn{3}{l}{
      \phl{\gcpres{\til x, y, \til z}
      {\gcom{\til x}{y}{G};H}
      {\big((\send{x_i}{x_i'}{y',\til x'_{\setminus i}}{P_i}{Q_i})_i
      \pp \recv{y}{\til x'}{y'}R \pp \til S\big)}}}
      \\
      & \reducesto {}
      \\
      \multicolumn{3}{r}{
      \phl{\gcpres{\til x', y'}
      {G\{\til x'/\til x,y'/y\}} {\big(\til P \pp \gcpres{\til x, y, \til z}{H}
      {(\til Q \pp R \pp \til S)}\big)}}
      }
    \\
    \phl{\gcpres{\til x, y}{\gclose{\til x}{y}} {\big((\close{x_i})_i \pp \genwait}\big)} & \reducesto & \phl P
    \\
    \multicolumn{3}{l}{
      \phl{\gcpres{x, \til y, \til z}{\gsel{x}{\til y}{G}{H}} \big(\geninl
            \pp (\Case{y_i}{Q_i}{R_i})_i \pp \til S}\big)}
    \\
    & \reducesto & \phl{\gcpres{x, \til y, \til z}{G}{\big(P \pp \til Q \pp \til S\big)}}
    \\
    \multicolumn{3}{l}{
      \phl{\gcpres{x, \til y, \til z}{\gsel{x}{\til y}{G}{H}} \big(\inr{x}{P}
            \pp (\Case{y_i}{Q_i}{R_i})_i \pp \til S\big)}}
    \\
    & \reducesto & \phl{\gcpres{x, \til y, \til z}{H}{\big(P \pp \til R \pp \til S\big)}}
    \\
    \multicolumn{3}{l}{
      \phl{\gcpres{x, \til y, \til z}{\gsel{x}{\til y}{G}{H}} \big(\geninlinr
            \pp (\Case{y_i}{R_i}{S_i})_i \pp \til T\big)}}
    \\
    & \reducesto & \phl{\gcpres{x, \til y, \til z}{G}{\big(P \pp \til R \pp \til T\big)}}
    \\
    \multicolumn{3}{l}{
      \phl{\gcpres{x, \til y, \til z}{\gsel{x}{\til y}{G}{H}} \big(\geninlinr
            \pp (\Case{y_i}{R_i}{S_i})_i \pp \til T\big)}}
    \\
    & \reducesto & \phl{\gcpres{x, \til y, \til z}{H}{\big(Q \pp \til S \pp \til T\big)}}
    \\
    \phl{\gcpres{x, \til y}{\gbang{x}{\til y}{G}} \big(\genclient \pp (\srv{y_i}{x}{Q_i})_i\big)}
    & \reducesto & \phl{\gcpres{x,\til y}G {\big(P \pp \til Q\big)}}
    \\
    \phl{\gcpres{x, \til y}{\gbang{x}{\til y}{G}} \big(\genshutdown \pp (\srv{y_i}{x}{Q_i})_i\big)}
    & \reducesto & \phl{\left(\srvkill{u_j}\right)_j \pp P}\\
    \multicolumn{3}{r}{\hfill \mbox{where }\forall i. \forall v_i \in \m{fv}(Q_i) . v_i \neq y_i \Rightarrow \exists j . v_i = u_j}
    \\
    \phl{\gcpres{x, \til y}{\gbang{x}{\til y}{G}} \big(\genclone \pp (\srv{y_i}{x}{Q_i})_i\big)}
    & \reducesto & \\
    \multicolumn{3}{r}{
      \phl{\left(\cloneo{u_j}{u_j'}\right)_j; \gcpres{x, \til y}{\gbang{x}{\til y}{G}}
        {\gcpres{x', \til y'}{\gbang{x'}{\til y'}{G\{x'/x,\til y'/\til y\}}}
          {\big(P \pp (\srv{y_i}{x}{Q_i})_i} \pp (\srv{y'_i}{x'}{Q_i'})_i\big)}}}\\
    \multicolumn{3}{r}{\hfill\mbox{where } \forall i. \forall v_i \in \m{fv}(Q_i) . v_i \neq y_i \Rightarrow \exists j . v_i = u_j}
    \\
    \phl{\gcpres{x,
        y}{\globalaxiomp{x}{X}{y}{\dual{X}}}{\big(\axiom{x}{{X}}{w}{X}
            \pp P\big)}} 
    &\reducesto&
    \phl{P\{w/y\}}
    \\
    \phl{\gcpres{x,
        y}{\globalaxiomp{x}{X^{\bot}}{y}{{X}}}{\big(\axiom{w}{{X}}{x}{X}
        \pp P\big)}} 
    &\reducesto&
    \phl{P\{w/y\}}
  \end{array}
  \]
  \caption{Semantics for the action fragment.}
\label{fig:action_semantics}
\end{figure*}

The semantics of the action fragment of our calculus is presented in
Fig.~\ref{fig:action_semantics}. Notice that the $\beta$-reductions
are coordinated by a global type, as they correspond to multiple
parties communicating.\footnote{It may be surprising that some of the
  rules also include a restriction to a vector $\til z$, and a session
  using a vector of processes $\til S$, whose shape we do not
  inspect. This follows from the shape of coherence rules: rules such
  as $\tensor\parr$, $\oplus\with$ and $0\top$ contain an additional
  context $\Gamma$, captured here by $\til z$.}  The reduction rules
for server killing and cloning may look strange because both
\textsf{kill} and \textsf{clone} remain in the proof term after
reduction. This is because of the corresponding reduction in classical
linear logic, where it is necessary to use weakening and contraction
(corresponding to \textsf{kill} and \text{clone} respectively) also
after reduction. As a consequence, we get them as proof terms.

\section{Extending GCP with Choreographies}\label{sec:choreo}
In order to obtain full MCC, we extend the action fragment presented
in the previous section with choreography terms (interactions).
\paragraph{Syntax.} 
Unlike a process in the action fragment, a choreography, which
describes a global view of the communications of a process, will own
all of the endpoints of the sessions it describes. We call the
fragment of MCC with choreography terms the {\em interaction
  fragment}. Formally, MCC syntax is extended as follows:
\begin{displaymath}\scriptsize
  \begin{array}{rll@{\qquad}llllll}
    P ::= & \multicolumn{2}{l}{\rule{1.5mm}{0mm}\ldots \text{as in the action fragment}\ldots}
          & \mid \hl{\gengclient}   & \text{server accept/request}\\[1mm]
          & \mid \hl{\caxcut{z}{y^B}{x}{P}}     & \text{link} 
          & \mid \hl{\gengshutdown} & \text{server kill}\\[1mm]
          & \mid \hl{\gencom}       & \text{communication}
          & \mid \hl{\gengclone}    & \text{server clone}\\[1mm]
          & \mid \hl{\gencomclose}  & \text{session close}
    \\[1mm]
          & \mid \hl{\genginl}      & \text{left selection}
          & \mid \hl{\genchinlinr}  & \text{general selection}\\[1mm]
          & \mid \hl{\genginr}      & \text{right selection}
\end{array}
\end{displaymath}
The link term $\caxcut{z}{y^B}{x}{P}$ gives the choreographic view of
an axiom connected to some other process $P$ through endpoints $x$ and
$y$. A linear interaction $\gencom$ denotes a communication from
endpoints $\til x$ to the endpoint $y$, where a new session with
endpoints $\til x'$,$y'$ is created. The choreography $\gencomclose$
closes a session between endpoints $\til x$, $y$. When it comes to
branching, we have two choreographic terms denoting left and right
selection: $\genginl$ and $\genginr$. A third term, $\genchinlinr$, is
used for non-deterministic choice. In MCC, we can model non-linear
behaviour: this is done with the terms $\gengclient$, $\gengshutdown$
and $\gengclone$. The first term features a client $x$ starting a new
session with servers $\til y$, while the second term is used by
endpoint $x$ to shut down servers $\til y$. Finally, we have a term
for cloning servers so that they can be used by different clients in
different
sessions. 

\paragraph{Typing.} Fig.~\ref{fig:mcc_interaction} details the rules
for typing choreography terms. Each of these rules combines two rules
from the action fragment simulating their reduction, where the
conclusion of a rule corresponds to the redex and the premise to the
reductum. Unlike process rules, the choreography rules now also look
at $\Sigma$ to check that the interactions described conform to the
types of the ongoing sessions.  In rule~$\onebot$, we close a session
(removed from $\Sigma$) and terminate all processes involved in it.
Rule~$\tensorparr$ types the creation of a new session with protocol
$G$, created among endpoints $\til z$ and $w$; this session is stored
in $\Sigma$, while the process types are updated as in rules~$\tensor$
and $\parr$ above.  The remaining rules in the linear fragment are
similarly understood. Exponentials give rise to three rules, all of
them combining $\bang$ with another rule.  In rule~$\bangwhynot$,
process $x$ invokes the services provided by $\til y$, creating a new
session among these processes with type $G$.  Rule~$\bangw$ combines
$\bang$ with $\m{Weaken}$: here the processes providing the service
are simply removed from the context.  Finally, rule~$\bangC$ combines
$\bang$ with $\m{Contract}$, allowing a service to be
duplicated.  
\begin{figure*}[!t]
  \begin{displaymath}
  \begin{array}{l}
    \!\!\!  \textbf{Linear Fragment:}\\[2mm]
  \begin{array}{rl}
    &\infer[\Axcut]
    {\judge{\Sigma, \marka{\pair{(x,y)}{\globalaxiom{x}{A}{y}{}}}}{\markb{\genaxcut}}{\Psi \pp \seq \Gamma, \markc{\pair{x}{A}} \pp \markc{\seq \pair{w}{A}, \pair{y}{B}}}}
    {\judge{\Sigma}{\markb{P\{w/x\}}}{\Psi \pp \seq \Gamma, \markc{\pair{w}{A}}} & w \not \in \m{vars}(\Sigma) & A^\bot = B & A = X \mbox{ or } A = X^\bot }\\[5mm]
    &\infer[\tensorparr]
      {
        \judge{\Sigma,\marka{\pair{(\til x,y,\til u)}{\gcom{\til x}{y}{G};H}}}
              {\markb{\gencom}}
              {
                \Psi
                \pp
                \markc{\left( \seq \Gamma_{i1},\Gamma_{i2}, \pair{x_i}{A_i \tensort y B_i} \right)_i} \pp
                \seq \Gamma, \markc{\pair{y}{C \parrt{\til x} D}}
              }
      }
      {
        \judge{\Sigma,\marka{\pair{(\til x, y, \til u)}{H},\pair{(\til x',y')}{G\{\til x'/\til x,y'/y\}}}}
              {\markb{P}}
              {
                \Psi
                \pp
                \markc{\left( \seq \Gamma_{i1}, \pair{x'_i}{A_i} \right)_i \pp
                  \left( \seq \Gamma_{i2}, \pair{x_i}{B_i} \right)_i} \pp
                \seq \Gamma, \markc{\pair{y'}{C}, \pair{y}{D}}
              }
      }
    \\[5mm]
    &\infer[\onebot]
    {
      \judge{\Sigma, \marka{\pair{(\til x,y)}{\gclose{\til x}{y}}}}
            {\markb{\gencomclose}}
            {
              \Psi\pp
              \markc{\left( \seq \pair{x_i}{\onet y} \right)_i} \pp
              \seq \Gamma, \markc{\pair{y}{\bott{\til x}}}
            }
    }
    {
      \judge \Sigma {\markb{P}}
             {\Psi \pp \seq \Gamma}
    }
    \\[5mm]
    &\infer[\opluswith^1]
      {
        \judge{\Sigma,(\Sigma_i)_i,\marka{\pair{(x,\til y,\til u)}{\gsel{x}{\til y}GH}}}
              {\markb{\genginl}}
              {\begin{array}{l}
                \Psi \pp (\Psi_i)_i
                \pp
                \seq \Gamma, \markc{\pair{x}{A \oplust{\til y} B}}
                \pp\\
                \markc{\left( \seq \Gamma_i, \pair{y_i}{C_i \witht x D_i} \right)_i}
                \pp
                \left(\seq \Gamma_j, \pair{u_j}{E_j}\right)_j
              \end{array}}
      }
      {
	\begin{array}{l}
          \judge{\Sigma, (\Sigma_i)_i,\marka{\pair{(x,\til y,\til u)}{G}}}
	      {\markb{P}}
	      {
	        \Psi \pp \left(\Psi_i\right)_i
	        \pp
	        \seq \Gamma, \markc{\pair{x}{A}}
                \pp
                \markc{\left( \seq \Gamma_i, \pair{y_i}{C_i} \right)_i}
                \pp
                \left(\seq \Gamma_j, \pair{u_j}{E_j}\right)_j
	      }
	\\
	\left(\judge{\Sigma_i}
	      {\markb{Q_i}}
	      {
	        \Psi_i
                \pp
                \markc{\seq \Gamma_i, \pair{y_i}{D_i}}
	      }\right)_i
              \qquad\hfill
              H \gseq \pair{x}{B}, (\pair {y_i} {D_i})_i, (\pair {u_j} {E_j})_j
              \end{array}
      }
    \\[5mm]
    &\infer[\opluswith^2]
      {
        \judge{\Sigma,\left(\Sigma_i\right)_i,\marka{\pair{(x,\til y,\til u)}{\gsel{x}{\til y}GH}}}
              {\markb{\genginr}}
              {\begin{array}{l}
                \Psi \pp \left(\Psi_i\right)_i
                \pp
                \seq \Gamma, \markc{\pair{x}{A \oplust{\til y} B}}
                \pp\\
                \markc{\left( \seq \Gamma_i, \pair{y_i}{C_i \witht x D_i} \right)_i}
                \pp
                (\seq \Gamma_j, \pair{u_j}{E_j})_j              \end{array}}
      }
      {\begin{array}{l}
	\left(\judge{\Sigma_i}
	      {\markb{P_i}}
	      {
	        \Psi_i
                \pp
                \markc{\seq \Gamma_i, \pair{y_i}{C_i}}
	      }\right)_i
              \qquad\hfill
        G \gseq \pair{x}{A}, (\pair {y_i} {C_i})_i, (\pair {u_j} {E_j})_j
	\\
	\judge{\Sigma, \left(\Sigma_i\right)_i,\marka{\pair{(x,\til y,\til u)}{H}}}
	      {\markb{Q}}
	      {
	        \Psi \pp \left(\Psi_i\right)_i
	        \pp
	        \seq \Gamma, \markc{\pair{x}{B}}
                \pp
                \markc{\left( \seq \Gamma_i, \pair{y_i}{D_i} \right)_i}
                \pp
                (\seq \Gamma_j, \pair{u_j}{E_j})_j}       
        \end{array}
      }\\[5mm]
      &\infer[\opluswith]
      {\judge{\Sigma,\marka{\pair{(x,\til y,\til u)}{\gsel{x}{\til y}GH}}}
        {\markb{\genchinlinr}}
              {\Psi \pp 
                \seq \Gamma, \markc{\pair{x}{A \oplust{\til y} B}}
                \pp
                \markc{\left( \seq \Gamma_i, \pair{y_i}{C_i \witht x D_i} \right)_i}}}
      {\judge{\Sigma,\marka{\pair{(x,\til y,\til u)}{G}}}{\markb{P}}
        {\Psi \pp \seq \Gamma, \markc{\pair{x}{A}} \pp
          \markc{\left(\seq \Gamma_i, \pair{y_i}{C_i}\right)_i}} &
        \judge{\Sigma,\marka{\pair{(x,\til y,\til u)}{H}}}{\markb{Q}}
              {\Psi \pp \seq \Gamma, \markc{\pair{x}{B}} \pp
          \markc{\left(\seq \Gamma_i, \pair{y_i}{D_i}\right)_i}}}
    \\[5mm]
  \end{array}
    \\
    \!\!\!\textbf{Exponential Fragment:}\\[2mm]
  \begin{array}{rl}
    &\infer[\bangwhynot]
      {
        \judge {\Sigma, \marka{\pair{(x, \til y)}{\gbang{x}{\til y}{G}}}}
               {\markb{\gengclient}}
               {
                 \Psi \pp
                 \seq \Gamma, \markc{\pair x{\whynott{\til y} A}}
                 \pp
                 \markc{\left( \seq\  ?\Gamma_i,\  \pair{y_i}{\bangt x B_i}\  \right)_i}
               }
      }
      {
        \judge {\Sigma, \marka{\pair{(x, \til y)}G}}
               {\markb{P}}
               {
                 \Psi \pp
                 \seq \Gamma, \markc{\pair xA}
                 \pp
                 \markc{\left( \seq ?\Gamma_i, \pair{y_i}{B_i} \right)_i}
               }
        &
        \forall i \,.\,\m{vars}(?\Gamma_i) \cap \m{vars}{(\Sigma)} = \emptyset
      }
    \\[5mm]
    &\infer[\bangw]
      {
        \judge {\Sigma, \marka{\pair{(x, \til y)}{\gbang{x}{\til y}{G}}}}
               {\markb{\gengshutdown}}
               {
                 \Psi \pp
                 \seq \Gamma, \markc{\pair x{\whynott{\til y} A}}
                 \pp
                 \markc{\left( \seq\  ?\Gamma_i,\  \pair{y_i}{\bangt x B_i}\  \right)_i}
               }
      }
      {
        \judge {\Sigma} {\markb{P}}
               {
                 \Psi \pp
                 \seq \Gamma
               }
        &
        \left(\judge {\cdot}{\markb{Q_i}}
                     {
                       \markc{\seq ?\Gamma_i,\ \pair{y_i}{B_i}}
                     }
                     \right)_i
         &
         G \gseq \pair{x}{A}, \left(\pair{y_i}{B_i}\right)_i            
      }
    \\[5mm]
    &\infer[\bangC]
      {
        \judge {\Sigma, \marka{\pair{(x, \til y)}{\gbang{x}{\til y}{G}}}}
               {\markb{\gengclone}}
               {
                 \Psi
                 \pp
                 \seq \Gamma, \markc{\pair x{\whynott{\til y} A}}
                 \pp
                 \markc{\left( \seq\  ?\Gamma_i,\  \pair{y_i}{\bangt x B_i}\  \right)_i}
               }
      }
      {
        \judge 
            {\Sigma, 
              \begin{array}l
                \marka{\pair{(x, \til y)}{\gbang{x}{\til y}{G}}},\\
                \marka{\pair{(x', \til y')}{\gbang{x'}{\til y'}{G\{x'/x,\til y'/\til y\}}}}
              \end{array}
            }
            {\markb{P}}
            {
              \Psi \pp
              \begin{array}l
              \seq \Gamma, \markc{\pair {x}{\whynott{\til y} A}, \pair{x'}{\whynott{\til y'} A}} \\
              \markc{\left( \seq\  ?\Gamma_i,\  \pair{y_i}{\bangt {x} B_i} \pp \seq\  ?\Gamma_i',\ \pair{y_i'}{\bangt {x'} B_i}\ \right)_i}
              \end{array}
            }
            &
            \begin{array}{l}
              \m{vars}(?\Gamma_i) \cap \m{vars}{(\Sigma)} = \emptyset \\
              \m{vars}(?\Gamma_i') \cap \m{vars}{(\Sigma)} = \emptyset
            \end{array}
        }
  \end{array}
  \end{array}
  \end{displaymath}
  \caption{Rules for the interaction fragment.}
  \label{fig:mcc_interaction}
\end{figure*}
%

\paragraph{Reduction Semantics.} 
\begin{figure*}[!t]
  \[
  \begin{array}{rrl}
    \phl{\gcpres{x,y}{\globalaxiom{x}{X}{y}{}}{(\caxcut{w}{y^{X^\bot}}{x}{P})}}
    & \reducesto
    & \phl{P\{w/x\}}
    \\[2mm]
    \phl{\gcpres{x,y}{\globalaxiom{x}{X^\bot}{y}{}}{(\caxcut{w}{y^X}{x}{P})}}
    & \reducesto
    & \phl{P\{w/x\}}
    \\[2mm]
    \phl{\gcpres{\til x, y, \til z}
      {\gcom{\til x}{y}{G};H}
      { \left(\gencom\right) }}
    & \reducesto
    & \phl{\gcpres{\til x', y'}
      {G\{\til x'/\til x,y'/y\}} {\gcpres{\til x, y, \til z}{H}
        {P}}}
    \\[2mm]
    \phl{\gcpres{\til x, y}{\gclose{\til x}{y}} {(\gencomclose)}} & \reducesto & \phl P
    \\[2mm]
    \phl{\gcpres{x, \til y, \til z}{\gsel{x}{\til y}{G}{H}} {\left( \genginl\right)  }}
    &\reducesto&
    \phl{\gcpres{x, \til y, \til z}{G}{P}}
    \\[2mm]
    \phl{\gcpres{x, \til y, \til z}{\gsel{x}{\til y}{G}{H}} {\left( \genginr\right)  }}
    &\reducesto&
    \phl{\gcpres{x, \til y, \til z}{H}{Q}}
    \\[2mm]
    \phl{\gcpres{x, \til y, \til z}{\gsel{x}{\til y}{G}{H}} {\left( \genchinlinr\right)  }}
    &\reducesto&
    \phl{\gcpres{x, \til y, \til z}{G}{P}}
    \\[2mm]    
    \phl{\gcpres{x, \til y, \til z}{\gsel{x}{\til y}{G}{H}} {\left( \genchinlinr\right)  }}
    &\reducesto&
    \phl{\gcpres{x, \til y, \til z}{H}{Q}}
    \\[2mm]
    \phl{\gcpres{x, \til y}{\gbang{x}{\til y}{G}} { \left(\gengclient\right) }} 
    &\reducesto& \phl{\gcpres{x, \til y}GP}
    \\[2mm]
    \phl{\gcpres{x, \til y}{\gbang{x}{\til y}{G}} { \left(\gengshutdown\right) }}
    & \reducesto & \phl{\left(\srvkill{u_j}\right)_j \pp P}\ 
                   \hfill 
                   \quad(\forall v_i \in
                   \m{fv}(Q_i) . v_i \neq y_i \Rightarrow \exists j
                   . v_i = u_j)
    \\[2mm]
    \phl{\gcpres{x, \til y}{\gbang{x}{\til y}{G}} {\left(\gengclone\right)}}
    & \reducesto & \\
    \multicolumn{3}{r}{
      \phl{(\cloneo{u_j}{u_j'})_j;\gcpres{x, \til y}{\gbang{x}{\til y}{G}}
        {\gcpres{x', \til y'}{\gbang{x'}{\til y'}{G\{x'/x,\til y'/\til y\}}}
          {P}}}\hfill \mbox{(see Remark \ref{rem:clone_inter_sem})}} 
  \end{array}
  \]
        \caption{Semantics for the interaction fragment.}
  \label{fig:interaction_semantics}
\end{figure*}
%

%
Fig.~\ref{fig:interaction_semantics} gives the reductions for the
interaction fragment. From a proof-theoretical perspective, these
reductions correspond to proof transformations of $\m{C}$ rules from
Fig.~\ref{fig:mcc_interaction} followed by a structural $\m{Scope}$
rule; the transformation removes the $\m{C}$ rule and pushes
$\m{Scope}$ higher up in the proof tree.

\begin{remark}[Server Cloning]\label{rem:clone_inter_sem} 
  The reduction rule for a server cloning choreography must clone all
  of the doubled endpoints. Looking at the typing rule $\bangC$ on
  Figure \ref{fig:mcc_interaction}, cloned variables $u_j$ are all of
  the endpoints mentioned in $\left(?\Gamma_i\right)_i$, and $u_j'$
  are corresponding endpoints from $\left(?\Gamma_i'\right)_i$. To
  make the search for these variables syntactic, one could do an
  endpoint projection, as described in the next section, and look at
  the appropriate subterm of the $\m{Conn}$ rule which connects $y_i$
  and $x$. The $u_j$ are then the free variables of this subterm,
  excluding $y_i$.
\end{remark}

\paragraph{Structural equivalence.} The reductions given earlier
require that programs are written in the very specific form given in
their left-hand side.  Formally, this is achieved by closing
$\reducesto$ under structural equivalence: if $P\equiv P'$,
$P'\reducesto Q'$ and $Q'\equiv Q$, then $P\reducesto Q$. The
equivalences for the interaction part are given in
Fig.~\ref{fig:commuting}
. As in the action fragment, we are only interested in commuting
conversions of typable programs, and therefore rely on typing
derivations for finding the correct equivalence.
\begin{figure}[!t]
\[\scriptsize
\begin{array}{ll@{\!\,\qquad}l}
  \hl{\genaxcut \pp \til Q}
  \equiv \hl{\caxcut{w}{y^B}{x}{(P \pp \til Q)}} &\\[1mm]
  \hl{\gencom \pp \til Q}
  \equiv \hl{\com {\til x}{\til x'}{y}{y'}{(P \pp \til Q)}} &\\[1mm]
  \hl{\gencomclose \pp \til Q}
  \equiv \hl{\comclose {\til x}y{(P \pp \til Q)}} & \\[1mm]
  \hl{\genginl \pp \til S}
  \equiv \hl{\comsel{x}{\til y}{\lleft}{(P \pp \til S)}{Q_1,\dots,Q_n}} \\[1mm]
  \hl{\genginl \pp \til S}
  \equiv \hl{\comsel{x}{\til y}{\lleft}{(P \pp \til S)}{(Q_1, \dots, (Q_i \pp \til S), \dots,Q_n)}} \\[1mm]
  \hl{\genginr \pp \til S}
  \equiv \hl{\comsel{x}{\til y}{\lright}{P_1,\dots,P_n}{(Q \pp \til S)}}\\[1mm]
  \hl{\genginr \pp \til S}
  \equiv \hl{\comsel{x}{\til y}{\lright}{(P_1,\dots,(P_i \pp \til S),\dots,P_n)}{(Q \pp \til S)}}\\[1mm]
  \hl{\gengclient \pp \til Q}
  \equiv \hl{\gclient x{\til y}{(P\pp \til Q)}}\\[1mm]
  \hl{\gengshutdown \pp \til S}
  \equiv \hl{\gshutdown x{\wtil{y(Q)}}{(P \pp \til S)}} \\[1mm]
  \hl{\gengclone \pp \til Q}
  \equiv \hl{\gclone x{\til y}{x'}{\til y'}{(P \pp \til Q)}}\\[2mm]
    \hl {\res {{\til w}:\!G} (\gencom)} \equiv \hl {\com {\til x} {\til x'} y {y'} {\res {{\til w}:\!G}P}} & \\[1mm]
    \hl {\res {{\til w}:\!G} (\gencomclose)} \equiv \hl {\comclose {\til x}y{\res {{\til w}:\!G}P}} & \\[1mm]
    \hl {\res {{\til w}:\!G} (\genginl)}
    \equiv \hl {\comsel{x}{\til y}{\lleft}{\res {{\til w}:\!G}P}{Q_1,\dots,Q_n}}\\[1mm]
    \hl {\res {{\til w}:\!G} (\genginl)}
    \equiv \hl {\comsel{x}{\til y}{\lleft}{\res {{\til w}:\!G}P}{Q_1, \dots,\res {{\til w}:\!G}Q_i, \dots, Q_n}}\\[1mm]
      \hl {\res {{\til w}:\!G} (\genginr)}
    \equiv \hl {\comsel{x}{\til y}{\lright}{P_1,\dots,P_n}{\res {{\til w}:\!G}Q}}\\[1mm]
    \hl {\res {{\til w}:\!G} (\genginr)}
    \equiv \hl {\comsel{x}{\til y}{\lright}{P_1,\dots,\res{{\til w}:\!G}P_i,\dots,P_n}{\res {{\til w}:\!G}Q}}\\[1mm]
    \hl{\res{{\til w}:\!G} (\geninlinr)} \equiv \hl{\inlinr{x}{\res{{\til w}:\!G}P}{\res{{\til w}:\!G}Q}}\\[1mm]
    \hl {\res {{\til w}:\!G} (\gengclient)} \equiv \hl {\gclient x{\til y}{\res {{\til w}:\!G}P}} & \\[1mm]
    \hl {\res {{\til w}:\!G} (\gengshutdown)} \equiv \hl {\gshutdown x{\wtil{y_i(Q_i)}}{\res {{\til w}:\!G}P}} & \\[1mm]
   \hl {\res {{\til w}:\!G} (\gengclone)} \equiv \hl {\gclone x{\til y}{x'}{\til y'}{\res {{\til w}:\!G}P}} & \\[2mm] 
   \hl{\res{{\til z}:\!G}(\gengclone \pp \til Q)} \equiv \hl{\gclone x{\til y}{x'}{\til y'}{\res{{\til z}:\!G}(P\pp\til Q)}}
\end{array}
\]
\caption{Equivalences for commuting $\m{C}$-rules with $\m{Conn}$ and $\m{Scope}$. All rules assume
  that both sides are typable in the same context.}
\label{fig:commuting}
\end{figure}
Besides the commuting conversions, we also have the usual structural
equivalence rules where parallel composition under restriction,
linking process and global type for linking sessions are all
symmetric. Furthermore, the order of restrictions can be swapped.
\vspace{-2mm}

{\scriptsize \begin{align*}
  \hl{\axiom x A y B} & \equiv \hl{\axiom y {A^\bot} x A}
  \\
  \hl{\gcpres{\til w, y, x, \til z}{G}{\til P \pp R \pp Q \pp \til S}} 
  &\equiv \hl{\gcpres{\til w, x, y, \til z}{G}{\til P \pp Q \pp R \pp \til S}}
  \\
  \hl{\gcpres{z, \til w}{H}{\gcpres{x, \til y}{G}{P \pp \til R} \pp \til Q}}
  &\equiv 
  \phl{\gcpres{x, \til y}{G}{\gcpres{z, \til w}{H}{P \pp \til Q} \pp \til R}}
\end{align*}}

\paragraph{Properties.} We finish the presentation of MCC by
establishing the expected meta-theoretic properties of the system.
As structural congruence is typing-based, subject congruence is a
property holding by construction:
\begin{theorem}[Subject Congruence]
  $\judge \Sigma P \Psi$ and $P \equiv Q$ implies that
  $\judge \Sigma Q \Psi$.
\end{theorem}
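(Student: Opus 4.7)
The plan is to proceed by induction on the derivation of $P \equiv Q$, treating $\equiv$ as the smallest congruence generated by the axioms from Figures \ref{fig:commconv_action} and \ref{fig:commuting} together with the standard closure rules (reflexivity, symmetry, transitivity, and congruence under every syntactic construct). The claim is marketed as holding \emph{by construction}, so the work is essentially bookkeeping: verifying that each case preserves the context $\Sigma \pp \Psi$ intact.

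For the closure rules, the argument is routine. Reflexivity is trivial; symmetry and transitivity follow by applying the induction hypothesis once or twice. For congruence under a constructor $C[-]$, I would invert the typing derivation of $\judge\Sigma{C[P']}\Psi$ to obtain a typing of the hole $P'$ in some subcontext, apply the induction hypothesis to get a typing of $Q'$ in that same subcontext, and then reapply the very same typing rule at the top to conclude $\judge\Sigma{C[Q']}\Psi$. This uses only the compositionality of the typing rules in Figures \ref{fig:mcc_action} and \ref{fig:mcc_interaction}.

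For the base cases, I would rely on the caveat annotating both figures, namely that every axiom is stated with the proviso that \emph{both sides are typable in the same context}. This is more than a disclaimer: each axiom is built to reflect a harmless permutation of two adjacent rule applications in the proof tree. For commuting a $\m{Conn}$ or $\m{Scope}$ past an action rule such as $\tensor$, $\parr$, $\bot$, $\oplus_i$, $\with$, $?$, $\m{Weaken}$ or $\m{Contract}$, I would inspect the derivation of the left-hand side, identify the local permutation witnessing the equivalence, and rebuild the derivation of the right-hand side with exactly the same conclusion $\Sigma \pp \Psi$. The symmetric axioms (permuting the ordering of $\pp$, of nested restrictions, and flipping the forwarder) are witnessed directly by the symmetry of the corresponding typing rules.

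The only mildly delicate cases are the distributivity axioms involving branching constructs (e.g.\ $\mathsf{case}$, $\inlinr{\cdot}{\cdot}{\cdot}$, $\genchinlinr$), where a $\m{Conn}$ or $\m{Scope}$ has to be duplicated across both branches. Here the obstacle is ensuring that the shared contextual hypersequent and session environment can indeed be supplied to each branch: this is guaranteed because the typing rules $\with$, $\oplus$ and $\opluswith$ already demand both branches be typed in identical contexts, so the rule to be commuted inside can be applied separately in each branch without any splitting or duplication of resources. Once this observation is in hand, every axiom discharges uniformly and the induction goes through.
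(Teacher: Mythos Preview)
Your proposal is correct and follows essentially the same approach as the paper: induction on the derivation of $P \equiv Q$, relying on the fact that the equivalence axioms are typing-based by construction (the proviso that both sides are typable in the same context). The paper's own proof is a one-liner deferring details to a companion report, and your elaboration of the closure cases and the rule-permutation reading of the axioms is exactly the kind of bookkeeping that justifies why the result holds ``by construction.''
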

\begin{proof}
  By induction on the proof that $P\equiv Q$.
  In~\cite{CCMM18}, 
  it is explained how the rules for structural equivalence were
  derived, making this proof straightforward.
\end{proof}
Moreover, our reductions preserve typing since they are proof
transformations.
\begin{theorem}[Subject Reduction]
  $\judge \Sigma P \Psi$ and $P \reducesto Q$ implies $\judge \Sigma Q \Psi$.
\end{theorem}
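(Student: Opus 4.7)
The plan is to prove subject reduction by induction on the derivation of $P \reducesto Q$. The closure of $\reducesto$ under structural equivalence is handled immediately by an appeal to Subject Congruence: if $P \equiv P' \reducesto Q' \equiv Q$, then $\judge \Sigma P \Psi$ gives $\judge \Sigma {P'} \Psi$ by Subject Congruence, the inductive hypothesis yields $\judge \Sigma {Q'} \Psi$, and a second application of Subject Congruence delivers $\judge \Sigma Q \Psi$. So the substantive work is a case analysis on the base reductions listed in Figures~\ref{fig:action_semantics} and~\ref{fig:interaction_semantics}.

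For each base reduction, I would proceed by inversion on the typing derivation of the left-hand side, read off the types of the subterms involved, and exhibit a typing derivation for the right-hand side in the same context $\Sigma \pp \Psi$. Crucially, each interaction-fragment typing rule in Fig.~\ref{fig:mcc_interaction} is designed so that its premise types exactly the reductum of the corresponding reduction: e.g.\ the premise of $\onebot$ types $P$ against $\Psi \pp \seq \Gamma$, which is precisely what subject reduction demands for $\gcpres{\til x, y}{\gclose{\til x}{y}}{((\close{x_i})_i \pp \genwait)} \reducesto P$. So for the linear interaction cases ($\Axcut$, $\tensorparr$, $\onebot$, $\opluswith^1$, $\opluswith^2$, $\opluswith$) the new typing derivation is essentially read off the premises of the rule that concluded the typing of the redex, combined with a single $\m{Scope}$ application on the updated session environment (noting that $G$ is removed from $\Sigma$ and replaced by its continuations). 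For the action-fragment reductions the argument is analogous but uses the standard cut-elimination transformation, re-combining the derivations through $\m{Conn}$ and $\m{Scope}$ with the coherence-shaped premises from Fig.~\ref{fig:coherence}.

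The exponential reductions require more care. For $\bangw$ (server kill), inversion on the typing gives a derivation of $P$ in $\Sigma \pp \Psi \pp \seq \Gamma$ together with isolated derivations of each $Q_i$ in $\seq ?\Gamma_i, \pair{y_i}{B_i}$; I need to show that the residual $\left(\srvkill{u_j}\right)_j \pp P$ is typable, using the $\m{Weaken}$ rule once for each free endpoint $u_j$ flagged in the side-condition of the reduction, and then combining with $P$ via repeated $\m{Conn}$ applications guided by the original hypersequent structure. The $\bangC$ case is the main obstacle: I need to duplicate the entire group of server derivations and re-wire them into two fresh sessions indexed by primed names, mirroring the typing rule's duplicated premise. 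Here I will use the typing of $P$ with its two sessions $\pair{(x,\til y)}{\gbang{x}{\til y}{G}}$ and $\pair{(x',\til y')}{\gbang{x'}{\til y'}{G\{x'/x,\til y'/\til y\}}}$ already present in the premise, wrap the body with two nested $\m{Scope}$ rules in the order dictated by the reduct, and reintroduce the $\cloneo{u_j}{u_j'}$ prefixes as typed $\m{Contract}$-style $\bangC$ applications on the endpoints flagged in Remark~\ref{rem:clone_inter_sem}.

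The main obstacle throughout is bookkeeping: in MCC, a single reduction can move multiple sequents from $\Psi$ across $\Sigma$, and the coherence judgements $G \gseq \ldots$ must be preserved or reconstructed after the transformation. I expect to appeal, where necessary, to an auxiliary lemma stating that the premises of each coherence rule can be inverted to yield coherence judgements for the continuations $G, H$ of a type, so that the new $\m{Scope}$ applications in the reduct have valid coherence side-conditions. Once this lemma is in hand, every case reduces to a bureaucratic re-arrangement of the derivation tree mirroring the proof-theoretic cut reduction from which the operational rule was designed, which is precisely the remark made before the statement that ``our reductions preserve typing since they are proof transformations.''
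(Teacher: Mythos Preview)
Your proposal is correct and follows the same approach as the paper: induction on the reduction derivation, relying on the fact that each typing rule (in particular the $\m{C}$-rules of Fig.~\ref{fig:mcc_interaction}) was deliberately designed so that its premise already types the reductum, making each case a routine rearrangement of the derivation. The paper's own proof is far terser than yours --- it simply states the induction and defers all detail to~\cite{CCMM18} --- so your expanded case analysis adds value; one small slip to fix when you write it up: in the $\bangw$/kill case the residual $(\srvkill{u_j})_j \pp P$ is typed purely by iterated $\m{Weaken}$ on top of the derivation for $P$, with no $\m{Conn}$ applications involved.
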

\begin{proof}
  By induction on the proof that $P\reducesto Q$.
  In~\cite{CCMM18}, 
  it is explained how the semantics of MCC were designed in order to
  make this proof straightforward.
\end{proof}
Finally, we can show that MCC is deadlock-free, since the top-level
$\m{Scope}$ application can be pushed up the derivation.  In case the
top-level $\m{Scope}$ application is next to an application of
$\m{Conn}$, either the choreography can reduce directly or both rules
can be pushed up.  Proof-theoretically, this procedure can be viewed
as MCC's equivalent of the Principal Lemma of Cut Elimination.
\begin{theorem}[Deadlock-freedom]\label{thm:deadlockfree}
  If $P$ begins with a restriction and $\judge \Sigma P \Psi$, then there exists $Q$ such that
  $P\reducesto Q$.
\end{theorem}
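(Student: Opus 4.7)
The plan is to follow the proof-theoretic route suggested by the authors, effectively carrying out the Principal Lemma of cut elimination for this system. Writing $P = \res{\til x : G} P'$, I consider the typing derivation, whose last rule must be $\m{Scope}$ with premise $\judge{\Sigma, \pair{(x_i)_i}{G}}{P'}{\Psi \pp (\seq \Gamma_i, \pair{x_i}{A_i})_i}$. I would then proceed by induction on the structure of $P'$ (equivalently, on the height of the subderivation), pushing the top-level $\m{Scope}$ upward through unrelated rules by means of the structural equivalences in Fig.~\ref{fig:commconv_action} and Fig.~\ref{fig:commuting}, until it becomes adjacent to a rule that directly concerns the session $\til x : G$. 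At that point, a reduction rule from Fig.~\ref{fig:action_semantics} or Fig.~\ref{fig:interaction_semantics} will fire.

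More precisely, I split on the outermost construct of $P'$. If $P'$ is an action or choreography prefix whose subject does not belong to $\til x$, or a $\m{Conn}$ / $\m{Scope}$ that introduces a disjoint session, the commuting conversions let me rewrite $P$ as a term whose outermost restriction sits strictly higher in the derivation, and I close the case by the induction hypothesis (or by exhibiting the same commuting step as the reduction, since $\reducesto$ is closed under $\equiv$). If $P'$ is itself a choreographic constructor whose session annotation is precisely $\til x : G$ — that is, one of $\Axcut$, $\tensorparr$, $\onebot$, $\opluswith$ (in either polarised or nondeterministic form), $\bangwhynot$, $\bangw$, or $\bangC$ — then the shape of $\res{\til x : G} P'$ matches exactly the left-hand side of the corresponding rule in Fig.~\ref{fig:interaction_semantics}, and $Q$ is obtained directly. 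If instead $P'$ is a parallel composition whose connecting session (introduced by $\m{Conn}$) is $\til x : G$, then coherence $G \gseq (\pair{x_i}{A_i})_i$ forces the outermost action on each endpoint $x_i$ in the corresponding component to be dually matched across the parallel participants, so one of the $\beta$-reductions of Fig.~\ref{fig:action_semantics} applies.

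The main obstacle is the last of these cases: arguing that once the outermost $\m{Scope}$ has been pushed past all irrelevant prefixes, the surviving parallel components genuinely present their $x_i$-actions at the top. This requires commuting prefixes on endpoints outside $\til x$ outside the $\m{Conn}$; the equivalences in Fig.~\ref{fig:commconv_action} do exactly this, but the argument relies on typing information (which prefix acts on which session) rather than pure syntax, so the case analysis must be driven by the derivation. A secondary subtlety is the nondeterministic selection term $\geninlinr$, where both equivalences for $\res{\til w{:}G}(\geninlinr)$ apply and either choice yields a legal reduct; similarly, the $\bangC$ case requires invoking Remark~\ref{rem:clone_inter_sem} to identify the cloned free variables. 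Finally, termination of the push-up process is guaranteed because each commuting step strictly decreases the size of the subderivation between the top-level $\m{Scope}$ and the rule on the session $\til x : G$, which must eventually be reached since every $x_i$ is used in $P'$.
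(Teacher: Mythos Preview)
Your proposal is correct and follows essentially the same route as the paper's sketch: induction on the derivation, pushing the top-level $\m{Scope}$ (together with the adjacent $\m{Conn}$, when present) upward via the commuting conversions of Figs.~\ref{fig:commconv_action} and~\ref{fig:commuting} until a $\beta$-rule from Fig.~\ref{fig:action_semantics} or Fig.~\ref{fig:interaction_semantics} fires. Your case split---separating the $\m{C}$-rule case (Fig.~\ref{fig:interaction_semantics}) from the $\m{Conn}$ case (Fig.~\ref{fig:action_semantics})---and your termination measure are spelled out in more detail than the paper's two-paragraph sketch, but the underlying argument is the same.
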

\begin{proof}[Sketch]
  Our proof idea is similar to that of Theorem~3
  in~\cite{CMS18}. 
  We apply induction on the size of the
  proof of $\judge\Sigma P\Psi$.  If a rule from
  Fig.~\ref{fig:action_semantics} or
  Fig.~\ref{fig:interaction_semantics} is applicable (corresponding to
  a proof where an application of $\m{Conn}$ and an application of
  $\m{Scope}$ meet), then the thesis immediately holds.

  Otherwise, we apply commuting conversions from
  Fig.~\ref{fig:commconv_action} or Fig.~\ref{fig:commuting},
  ``pushing'' the top-level $\m{Scope}$ application up in the
  derivation (and, if it is preceded by an application of $\m{Conn}$,
  ``pushing'' also that application).  This results in a smaller proof
  of $\judge\Sigma P\Psi$, to which the induction hypothesis can be
  applied.
\end{proof}



\section{Projection and Extraction}\label{sec:proj_extr}
%
As suggested by the previous sections, interactions can be implemented
in two ways: as a single choreography term, or as multiple process
terms appearing in different behaviours composed in parallel. In this
section, we formally show that choreography interactions can be
projected to process implementations, and symmetrically, process
implementations can be extracted to choreographies. We do this by
transforming proofs (derivations in the typing system), similarly to
the way we defined equivalences and reductions for MCC.

We start by defining the principal transformations for projection and
extraction, a set of equivalences that require proof terms to have a
special shape. We report such transformations in
Fig.~\ref{fig:proj_extr}: they perform extraction if read from left to
right, while they perform projection if read from right to left. The
extraction relation requires access to the list of open sessions
$\Sigma$ to ensure that we have all the endpoints participating in the
session to extract a choreography from.
%
\begin{figure}[t]
  \begin{align*}
    P \pp \axiom{x}{X}{y}{}
    & \quad\projextr\quad
    \caxcut{x}{y^{X^\bot}}{w}{P}  \;\qquad\qquad\qquad (w,y) \in \m{dom}(\Sigma)
    \\
    P \pp \axiom{y}{X}{x}{}
    & \quad\projextr\quad
    \caxcut{x}{y^X}{w}{P}  \quad\qquad\qquad\qquad (w,y) \in \m{dom}(\Sigma)
    \\
    \left(\send{x_i}{x_i'}{y',\til x'_{\setminus i}}{P_i}{Q_i}\right)_i \pp \recv{y}{\til x'}{y'}R \pp \til S
    & \quad\projextr\quad
    \com{\til x}{\til x'}{y}{y'}{\left(\til P \pp \til Q\pp R \pp \til S\right)}
    \\
    (\close{x_i})_i \pp \genwait
    & \quad\projextr\quad
    \gencomclose
    \\
    \geninl \pp (\Case{y_i}{Q_i}{R_i})_i \pp \til S
    & \quad\projextr\quad
    \comsel{x}{\til y}{\lleft}{P\pp \til Q \pp \til S }{R_1,\dots,R_n}
    \\
    \inr{x}P \pp (\Case{y_i}{Q_i}{R_i})_i \pp \til S
    & \quad\projextr\quad
    \comsel{x}{\til y}{\lright}{Q_1,\dots,Q_n}{P\pp \til R \pp \til S}
    \\
    \geninlinr \pp \left( \Case{y_i}{R_i}{S_i}\right)_i \pp \til T
    & \quad\projextr\quad
    \chinlinr{x}{\til y}{P \pp \til R \pp \til T}{Q \pp \til S \pp \til T}
    \\
    \genclient \pp (\srv{y_i}{x}{Q_i})_i
    & \quad\projextr\quad
    \gclient x{\til y}(P\pp \til Q)
    \\
    \genshutdown \pp (\srv{y_i}{x}{Q_i})_i
    & \quad\projextr\quad
    \gengshutdown
    \\
    \genclone \pp (\srv{y_i}{x}{Q_i})_i
    & \quad\projextr\quad
    \gclone x{\til y}{x'}{\til y'}\left(P \pp (\srv{y_i}{x}{Q_i})_i \pp (\srv{y'_i}{x'}{Q'_i})_i\right)
  \end{align*}
  \caption{Extraction ($\rightharpoonup$) and projection ($\leftharpoondown$).}
  \label{fig:proj_extr}
\end{figure}
%
%

%
The first two rules deal with axioms: the parallel composition (rule
$\m{Conn}$) of an axiom with a process $P$ can be expressed by rule
$\Axcut$ and vice-versa. On the third line, we show how to transform
the parallel composition of an output ($\tensor$) and an input
($\parr$) into a $\tensorparr$. Similarly, $\gencomclose$ is the
choreographic representation of the term
$(\close{x_i})_i \pp \genwait$. Each branching operation (left, right,
non-deterministic) has a representative in both fragments with
straightforward transformations. A server $\srv{y}{x}{Q}$ can either
be used by a client, killed or cloned. In the first two cases, such
interactions trivially correspond to the choreographic terms
$\gclient x{\til y}(P\pp \til Q)$ and $\gengshutdown$. In the case of
cloning, we create the interaction term
$ \gclone x{\til y}{x'}{\til y'}\left(P \pp (\srv{y_i}{x}{Q_i})_i \pp
  (\srv{y'_i}{x'}{Q'_i})_i\right)$, which shows how the choreographic
cloning $\gclone x{\til y}{x'}{\til y'}$ must be followed by two
instances of the server that is cloned.
Note that these transformations are derived by applying similar
techniques as those of cut elimination.
Concrete derivations, here omitted, are straightforward: an example can be
found in~\cite{CCMM18}. 

\begin{remark}
  \label{rem:proj_extr}
  In order to project/extract an arbitrary well-typed term, given the
  strict format required by the transformations in
  Fig.~\ref{fig:proj_extr}, we will sometimes have to perform
  rewriting of terms in accordance with the commuting conversions to
  reach an expected shape. In particular, we note that when
  projecting, we must first project the subterms (we start from the
  leaves of a proof), step by step moving down to the main term. In
  contrast, when extracting, we must proceed from the root of the
  proof towards the leaves.
\end{remark}

Note that our example in \S\ref{sec:preview} does not provide an exact
projection: in order to improve readability, we have removed all
parallels that follow output operations, which would be introduced by
the translation presented above. This is not problematic, since the
outputs in the example are just basic types.

\paragraph{Properties.} In the sequel, we write
$P\stackrel{\til x}{\longrightarrow}_{\m{extr}} P'$ whenever it is
possible to apply one of the transformations in
Fig.~\ref{fig:proj_extr} to (a term equivalent to) term $P$ from left
to right, where $\til x$ are the endpoints involved in the
transformation. Similarly, we write
$P\stackrel{\til x}{\longrightarrow}_{\m{proj}} P'$ whenever it is
possible to apply a transformation from Fig.~\ref{fig:proj_extr} to (a
term equivalent to) term $P$ from right to left. We also write
$P\Longrightarrow_{\m{extr}} P'$ ($P\Longrightarrow_{\m{proj}} P'$) if
there is a finite sequence of applications of
$\longrightarrow_{\m{extr}}$ ($\longrightarrow_{\m{proj}}$) and $P'$
cannot be further transformed. We then have the following results:
\begin{theorem}[Type Preservation]\label{thm:projextr_preserv}
  If $P\stackrel{\til x}{\longrightarrow}_{\m{extr}}Q$ and $\judge\Sigma P\Psi$, then
  $\judge\Sigma Q\Psi$, and if $Q\stackrel{\til x}{\longrightarrow}_{\m{proj}}P$ and
  $\judge\Sigma Q\Psi$, then $\judge\Sigma P\Psi$.
\end{theorem}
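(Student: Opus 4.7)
The plan is to prove both directions simultaneously by case analysis on the transformation rule applied, observing that the design of the interaction fragment rules in Figure~\ref{fig:mcc_interaction} mirrors precisely the combinations of action fragment rules from Figure~\ref{fig:mcc_action} joined via $\m{Conn}$ that appear on the left-hand side of Figure~\ref{fig:proj_extr}. Accordingly, each transformation in Figure~\ref{fig:proj_extr} corresponds to a local proof rewrite that is type-preserving essentially by construction. I treat the $\projextr$ step as a single rewrite; the theorem statement does not require closure under reduction to a normal form, so one-step preservation suffices.

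For the extraction direction, I fix a rule from Figure~\ref{fig:proj_extr} and invert the typing derivation of the left-hand side. Take the communication rule as a representative example: the LHS is a $\m{Conn}$-composition of several terms $\send{x_i}{x_i'}{\cdot}{P_i}{Q_i}$, one receiver $\recv{y}{\til x'}{y'}R$, and a spectator $\til S$. Inversion must produce (i) one application of rule $\tensor$ per sender, yielding subderivations for each $P_i$ and $Q_i$, (ii) an application of $\parr$ for the receiver, yielding a subderivation for $R$, and (iii) a top-level $\m{Conn}$ whose coherence premise is exactly $\gcom{\til x}{y}{G};H \gseq \cdots$. Reassembling these premises with the interaction-fragment rule $\tensorparr$ yields the derivation of the right-hand side $\com{\til x}{\til x'}{y}{y'}{\left(\til P \pp \til Q\pp R \pp \til S\right)}$ with the same $\Sigma$ and $\Psi$. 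The remaining cases (axiom/$\Axcut$, $\onebot$, all selection rules, $\bangwhynot$, $\bangw$, $\bangC$) are analogous: in each case the interaction rule has exactly the premises one obtains by combining the corresponding action rules below a $\m{Conn}$, modulo freshness side-conditions which follow from the $w\notin\m{vars}(\Sigma)$ and disjointness constraints that $\m{Conn}$ enforces. The projection direction is symmetric: invert the interaction-fragment rule applied at the root of the derivation for $Q$, and apply the corresponding action-fragment rules, finishing with a $\m{Conn}$ that uses the same coherence witness recorded in $\Sigma$.

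The main obstacle I anticipate is bookkeeping rather than a genuine conceptual difficulty. Specifically: (a) ensuring that the session environment $\Sigma$ is split consistently between the premises of $\m{Conn}$ on the LHS and the corresponding premises of the interaction rule on the RHS — this is immediate once one notices that the interaction rules in Figure~\ref{fig:mcc_interaction} already mention the full $\Sigma$ augmented by the new session being introduced; (b) matching up the additional ``spectator'' context $\til z$/$\til u$ appearing in several coherence rules with the ambient hypersequent on both sides; and (c) the non-deterministic $\opluswith$ cases where the LHS rule yields two derivations at once, which need to be threaded into either $\opluswith^1$ or $\opluswith^2$ on the RHS; here one uses inversion on the typing derivation to pick the correct interaction rule.

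Finally, Remark~\ref{rem:proj_extr} points out that arbitrary well-typed terms may need to be rewritten via commuting conversions before a rule of Figure~\ref{fig:proj_extr} applies; since the relation $\projextr$ is defined modulo $\equiv$ (``(a term equivalent to) term $P$''), we may freely invoke Subject Congruence (Theorem~1) to reduce to the shape required by the rule, preserving the typing throughout. Together with the case analysis above, this gives both implications of the theorem.
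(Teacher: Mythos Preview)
Your proposal is correct and matches the paper's approach: the paper's proof is essentially a one-line stub stating that the result follows by induction on the derivation of the extraction/projection step, deferring to a companion paper~\cite{CCMM18} for the explanation that the rules in Figure~\ref{fig:proj_extr} were \emph{derived from} the typing rules precisely so that type preservation holds by construction. Your case analysis (with the representative $\tensorparr$ case), together with the appeal to Subject Congruence for the closure under $\equiv$, spells out exactly what the paper leaves implicit.
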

\begin{proof}
  By induction on the proof that $P\stackrel{\til
    x}{\longrightarrow}_{\m{extr}} Q$ or $Q\stackrel{\til
    x}{\longrightarrow}_{\m{proj}} P$.  In~\cite{CCMM18},
  we 
  explain how the rules for projection and extraction were derived
  from the typing rules to ensure that the proof of this result is
  straightforward.
\end{proof}
\begin{theorem}[Admissibility of $\m{Conn}$ and $\m{C}$-rules] Let $P$
  be a proof term such that $\judge{}{P}{\seq \Gamma}$. Then,
  \begin{itemize}
  \item there exists $P'$ such that $P\Longrightarrow_{\m{extr}} P'$
    and $P'$ is $\m{Conn}$-free;

  \item there exists $P'$ such that $P\Longrightarrow_{\m{proj}} P'$
    and $P'$ is free from $\m{C}$-rules.
  \end{itemize}
\end{theorem}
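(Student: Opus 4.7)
The plan is to prove both parts by induction on the size of the typing derivation, in direct analogy with the proof of Theorem~\ref{thm:deadlockfree}. The key observation is that the hypothesis $\judge{}{P}{\seq \Gamma}$ — empty session environment and a single sequent — forces every $\m{Conn}$ in the derivation to be eventually consumed by a matching $\m{Scope}$, and symmetrically forces every $\m{C}$-rule to have all of its coherent participants available in a single proof branch. This is exactly what the transformations in Fig.~\ref{fig:proj_extr} require in order to fire.

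For the first part (extraction, $\Longrightarrow_{\m{extr}}$), I would proceed as follows. If the last rule of the derivation is not $\m{Conn}$, I apply the induction hypothesis to its sub-derivations and reassemble. If the last rule is $\m{Conn}$, I look for the outermost $\m{Scope}$ enclosing it in the conclusion; using the commuting conversions of Fig.~\ref{fig:commconv_action} (and Fig.~\ref{fig:commuting} as needed), I push the pair $(\m{Conn},\m{Scope})$ upwards through the surrounding proof until the $\m{Conn}$-composed processes begin with matching action constructors whose types are linked by one of the coherence rules of Fig.~\ref{fig:coherence}. At that point, exactly one of the left-to-right rewrites in Fig.~\ref{fig:proj_extr} applies, replacing the pair by a single $\m{C}$-rule application in the derivation and yielding a strictly smaller proof, to which the induction hypothesis supplies the desired $\m{Conn}$-free term. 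Type preservation along the way is guaranteed by Theorem~\ref{thm:projextr_preserv}.

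For the second part (projection, $\Longrightarrow_{\m{proj}}$), the plan is dual and simpler, because each choreographic rule of Fig.~\ref{fig:mcc_interaction} corresponds uniquely to one of the right-hand sides of Fig.~\ref{fig:proj_extr}. Following Remark~\ref{rem:proj_extr}, I start at the root of the derivation and, whenever the top rule is a $\m{C}$-rule, I replace it by the right-to-left reading of the corresponding transformation in Fig.~\ref{fig:proj_extr}, which produces a $\m{Conn}$/$\m{Scope}$ combination together with the underlying action rule(s); then I recurse into the sub-derivations. Commuting conversions are used only to relocate the newly introduced $\m{Conn}$ and $\m{Scope}$ rules so that the resulting proof remains well-formed. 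Termination is immediate because each step strictly decreases the number of $\m{C}$-rule applications.

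The main obstacle is the $\bangC$ case (server cloning). In the extraction direction, bringing a $\m{Conn}$ into principal form with respect to $\bangC$ requires duplicating all the servers along the coherent endpoints $y_i$ and their free variables $u_j$, matching the premise of the $\bangC$ rule via the renaming $Q_i'$; this mirrors the subtlety already noted in Remark~\ref{rem:clone_inter_sem}. In the projection direction, the symmetric obstacle is synthesising two renamed copies of the server sub-proofs from a single $\gengclone$ term. I would handle both by reading the typing of the $\bangC$ rule literally and invoking the renaming convention from \S\ref{sec:gcp} — the proof-theoretic content is routine, but the bookkeeping of endpoint names is where care is required.
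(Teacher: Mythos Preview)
Your extraction direction broadly matches the paper's sketch (eliminate the outermost $\m{Conn}$ via commuting conversions and Fig.~\ref{fig:proj_extr}), though your phrasing ``if the last rule is $\m{Conn}$'' is off: with $\Sigma$ empty and a single sequent in the conclusion, the bottom rule of the derivation is never $\m{Conn}$. What you actually need is to locate the outermost $\m{Conn}$ inside the proof together with its enclosing $\m{Scope}$, push them upward, and fire a left-to-right transformation --- which is what the paper does.

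The real gap is in projection. You cite Remark~\ref{rem:proj_extr} but then do the opposite of what it says: that remark, and the paper's proof, prescribe starting from the \emph{leaves} and eliminating the \emph{innermost} $\m{C}$-rule first, not the one at the root. This is not a cosmetic choice. The right-hand sides of the transformations in Fig.~\ref{fig:proj_extr} require the continuation of a $\m{C}$-term to already be a parallel composition of the appropriate shape --- e.g., projecting $\com{\til x}{\til x'}{y}{y'}{P}$ demands that $P$ have the form $\til P \pp \til Q \pp R \pp \til S$. If $P$ is itself a choreographic prefix such as another $\com{\ldots}{\ldots}{\ldots}{\ldots}{P'}$, no commuting conversion in Fig.~\ref{fig:commuting} decomposes it into such a parallel; you must project the inner $\m{C}$-rule first. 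Your claim that ``commuting conversions are used only to relocate the newly introduced $\m{Conn}$ and $\m{Scope}$ rules'' therefore understates what is needed, and with the root-first order the projection step simply does not fire. Reverse the order to innermost-first and the remainder of your argument (including the $\bangC$ bookkeeping) goes through as in the paper.
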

\begin{proof}[Sketch]
  The idea is similar to the proof of Theorem~4.4.1 in~\cite{M13:phd}:
  by applying commuting conversions we can always rewrite $P$ such
  that one of the rules in Fig.~\ref{fig:proj_extr} is applicable,
  thus eliminating the outermost application of $\m{Conn}$ (in the
  case of extraction) or the innermost application of a $\m{C}$-rule
  (in the case of projection).  See also Remarks~\ref{rem:proj_extr}
  and~\ref{rem:proj_nocontext}.
\end{proof}
\begin{remark}
  \label{rem:proj_nocontext}
  The theorem above is only applicable to judgments of the form
  $\judge{}{P}{\seq \Gamma}$. 
  This is because of the commuting conversion of the server rule
  \[\res{{\til x
    x}:\!G}(\gensrv \pp \til Q) \equiv \srv{y}{}{\res{{\til x
      x}:\!G}(P \pp \til Q)}\]
where we can only permute $\m{Conn}$ and $\m{Scope}$ together. This
conversion is needed to rearrange certain proofs into the format
required by the transformations in Fig.~\ref{fig:proj_extr}.  Note
that any judgement $\judge{\Sigma}{P}{\Psi}$ can always be transformed
into this format, by repeatedly applying rule \m{Scope} to all
elements in $\Sigma$.
\end{remark}

As a consequence of the admissibility of \m{Conn}, every program can
be rewritten into a (non-unique) process containing only process terms
by applying the rules in Fig.~\ref{fig:proj_extr} from right to left
until no longer possible.  Conversely, because of admissibility of
$\m{C}$-rules, every program can be rewritten into a maximal
choreographic form by applying the same rules from left to right until
no longer possible. 

We conclude this section with our main theorem that shows the
correspondence between the two fragments with respect to their
semantics. In order to do that, we annotate our semantics with the
endpoints where the reduction takes place. This is denoted by
$P\reducesto^{\til x} Q$ and $P\reducesto^{\bullet\til x} Q$ where
the first relation is a reduction in the action fragment, while the
second is a reduction in the interaction fragment.
The sequence $\m{rev}(\til x)$ is obtained by reversing $\til x$.
\begin{theorem}[Correspondence]
 Let $P$ be a proof term such that $\judge{\Sigma}{P}{\Psi}$. Then,
 \begin{itemize}
 \item $P\reducesto^{\til x} Q$ implies that there exists $P'$ s.t.
   $P\stackrel{\til x}{\longrightarrow}_{\m{extr}} P'$ and
   $P'\reducesto^{\bullet\til x} Q$;
 \item $P\reducesto^{\bullet\til x} Q$ implies that there exists $P'$ s.t.
   $P\stackrel{\m{rev}(\til x)}{\longrightarrow}_{\m{proj}} P'$ and
   $P'\reducesto^{\til x} Q$.
 \end{itemize}
\end{theorem}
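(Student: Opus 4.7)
The plan is to proceed by case analysis on the principal reduction rule being applied, exploiting the fact that each rule in Fig.~\ref{fig:proj_extr} is designed so that its left-hand side matches the redex of exactly one principal reduction from Fig.~\ref{fig:action_semantics} and its right-hand side matches the redex of exactly one principal reduction from Fig.~\ref{fig:interaction_semantics}, with both reductions producing the same reductum $Q$. This is the core observation: extraction and projection are not merely syntactic transformations between the two fragments, but they realise a step-by-step bijection between matching reduction redexes.

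For the first direction, given $P \reducesto^{\til x} Q$, I would first use Subject Congruence together with the structural equivalences of Fig.~\ref{fig:commconv_action} to rewrite $P$ into a form in which the redex at $\til x$ is syntactically exposed as the left-hand side of the appropriate rule in Fig.~\ref{fig:proj_extr}. Applying that rule once yields $P'$; by Theorem~\ref{thm:projextr_preserv}, $P'$ is well-typed in the same context $\Sigma$, $\Psi$. Since the right-hand side of the same rule, placed under the original surrounding restriction, is precisely the redex of the matching interaction-fragment reduction, and direct inspection shows that both principal reductions produce the same $Q$, we obtain $P' \reducesto^{\bullet\til x} Q$ as required.

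The second direction is symmetric, using projection and the equivalences of Fig.~\ref{fig:commuting}. The appearance of $\m{rev}(\til x)$ reflects the asymmetry observed in Remark~\ref{rem:proj_extr}: extraction traverses a proof from the root downwards, while projection proceeds from the leaves upwards, so the order in which the endpoints of a multiparty interaction are ``pulled apart'' by a single projection step is the reverse of the order in which they are ``gathered'' by a single extraction step from the same reduction rule.

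I expect the main obstacle to be the commutative cases, where the redex is buried under chains of $\m{Conn}$ and $\m{Scope}$ applications: several commuting conversions may be needed to bring the term into the exact shape demanded by Fig.~\ref{fig:proj_extr}. One has to verify that this rearrangement (i) preserves typing (handled by Subject Congruence), (ii) does not disturb the localisation of the redex at $\til x$, and (iii) yields a term to which the relevant entry of Fig.~\ref{fig:proj_extr} literally applies. A particular subtlety arises for server cloning, where the projection/extraction rule involves refreshing of names and the identification of the set of cloned endpoints (cf.~Remark~\ref{rem:clone_inter_sem}): one has to check that the renaming convention used in the two fragments agrees, so that the two reduced terms coincide on the nose and not merely up to $\alpha$-conversion. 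Once these structural manipulations are handled, the remaining work reduces to routine side-by-side inspection of the reduction rules and the matching entry of Fig.~\ref{fig:proj_extr}.
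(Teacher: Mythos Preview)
Your proposal is sound and consistent with the paper's approach, but note that the paper does not actually give a proof here: its entire argument is the one-line remark that ``this proof follows the same strategy as that of Theorem~6 in~\cite{CMS18}''. Your sketch is therefore considerably more detailed than what the paper provides, and reads as a plausible reconstruction of what that referenced strategy amounts to in the present setting --- namely, a case analysis on the principal reduction, using the one-to-one matching between entries of Fig.~\ref{fig:proj_extr} and pairs of action/interaction reductions sharing a common reductum.

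One remark on your explanation of $\m{rev}(\til x)$: your appeal to Remark~\ref{rem:proj_extr} (root-down vs.\ leaf-up traversal) is not quite the right justification, since the theorem concerns a \emph{single} reduction step and a \emph{single} application of extraction or projection, not an iterated traversal. The reversal arises more mundanely from how the endpoint vector annotating a reduction is read off the redex: the action-fragment redex lists the endpoints in one order (senders then receiver, say), while the interaction-fragment term and the corresponding projection rule name them in the opposite order. This is a cosmetic artefact of the annotation convention rather than a structural phenomenon, and the paper does not elaborate on it either.
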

\begin{proof}
  This proof follows the same strategy as that of Theorem~6 in~\cite{CMS18}.
\end{proof}

\section{Related Work and Discussion}\label{sec:related}
\paragraph{Related Work.}
The principle of choreographies as cut reductions was introduced in
\cite{CMS18}. As discussed in \S\ref{sec:intro}, that system cannot
capture services or multiparty sessions.  Another difference is that
it is based on intuitionistic linear logic, whereas ours on classical
linear logic -- in particular, on Classical Processes~\cite{W14}.

Switching to classical linear logic is not a mere change of
appearance. It is what allows us to reuse the logical understanding of
multiparty sessions in linear logic as \emph{coherence proofs},
introduced in \cite{CMSY17} and later extended to polymorphism in
\cite{CLMSW16}. These works did not consider choreographic programs,
and thus do not offer a global view on how different sessions are
composed, as we do in this paper.

Extracting choreographies from compositions of process code is
well-known to be a hard problem.  In \cite{LTY15}, choreographies that
abstract from the exchanged values and computation are extracted from
communicating finite-state machines. The authors of \cite{CLM17}
present an efficient algorithm for extracting concrete choreographic
programs with asynchronous messaging.  These works do not consider the
composition of multiple sessions, multiparty sessions, and services, 
as in MCC. However, they can both deal with infinite
behaviour (through loops or recursion), which we do not address. An
interesting direction for this feature would be to integrate
structural recursion for classical linear logic \cite{LM16}.

Our approach can be seen as a principled reconstruction of previous
works on choreographic programming. The first work that typed
choreographies using multiparty session types is \cite{CM13}. The idea
of mixing choreographies with processes using multiparty session types
is from \cite{MY13}. None of these consider extraction. 

\paragraph{Discussion.}
\noindent For the sake of clarity, our presentation of MCC adopts
simplifications that may limit the model expressivity. Below, we
discuss some key points as well as possible extensions based on
certain developments in this research line.

\mypar{Non-determinism} We introduced non-determinism in a
straightforward way, i.e., our non-deterministic rules in both action
and interaction fragments require for each branch to have the same
type, as done for standard session typing. However, this solution
breaks the property of confluence that we commonly have in logics. In
order to preserve confluence, we would have to extend MCC with the
non-deterministic linear types from \cite{CP17}.

\mypar{$\eta$-expansion} GCP in~\cite{CLMSW16} allows for the axiom to
be of any type $A$. This requires heavily using $\eta$-expansions for
transforming axioms into processes with communication actions. It is
straightforward to do this in the action fragment of MCC. However,
given the way choreographies work, we can only define an axiom for
binary sessions in the interaction fragment. As a consequence, in
order to apply extraction to a process where an axiom is engaged in a
multiparty session with several endpoints, we would need to first use
$\eta$-expansions to transform such axiom into an ordinary process. In
the opposite direction, we would never be able to project a process
containing an axiom from a choreography, unless it is part of a binary
session. We leave further investigation of this as future work.

\mypar{Annotated Types} The original version of GCP~\cite{CLMSW16}
comes with an extension called MCP, where an endpoint type $A$ is
annotated with names of endpoints which it will be in a session
with. In this way, endpoint types become more expressive, since it is
possible to specify with whom each endpoint has to communicate,
without having to use a global type (coherence proof) during
execution. We claim that this extension is straightforward for our
presentation of MCC.

\mypar{Polymorphism} As in GCP~\cite{CLMSW16}, we can easily add
polymorphic types to MCC. However, for simplifying the presentation of
this work, we have decided to leave it out, even though adding the GCP
rules to the action fragment is straightforward. In the case of the
interaction fragment, we obtain the following rule:
\begin{displaymath}\scriptsize
  \infer[\existsforall]
  {
    \judge {\Sigma, \marka{\pair{(x, \til y, \til u)}{\gexistsforall x{\til y}XG}}}
    {\markb{\gencomt}}
    {
      \Psi \pp
      \seq \Gamma, \markc{\pair x{\exists X.B}}
      \pp
      \markc{\left( \seq \Gamma_i, \pair{y_i}{\forall X.B_i} \right)_i}}
  }
  {
    \begin{array}{l}
      X \not \in \m{fv}(\Psi, \Gamma, (\Gamma_i)_i) \\[2mm]
      \judge {\Sigma, \marka{\pair{(x, \til y, \til u)}{G\{A/X\}}}}
      {\markb{P\{A/X\}}}
      {
      \Psi \pp
      \seq \Gamma, \markc{\pair x{B\{A/X\}}}
      \pp
      \markc{\left( \seq \Gamma_i, \pair{y_i}{B_i\{A/X\}} \right)_i}
      }
    \end{array}
  }
\end{displaymath}
Above we have added to the syntax of global types the term
$\gexistsforall x{\til y}XG$, denoting a session where an endpoint $x$
is supposed to send a type to endpoints $\til y$. At choreography
level, endpoint $x$ realises the abstraction of the global type
sending the actual type $A$. When it comes to extraction and
projection, we would have to add the following transformation:
\[ \gensendt \pp (\recvtype{y_i}x{X}{Q_i})_i \pp \til S 
  \quad\projextr\quad \comtype x{\til y}AX{\left(P\pp \til Q \pp \til
      S\right)}
 \]
 where $\gensendt$ and $\recvtype{y_i}x{X}{Q_i}$ are action fragment
 terms (as those of GCP). 
%

\mypar{Other Extensions} By importing the functional stratification
from \cite{TCP13}, we could obtain a monadic integration of
choreographies with functions. The calculus of classical higher-order
processes \cite{M17} could be of inspiration for adding code mobility
to MCC, by adding higher-order types.  Types for manifest sharing in
\cite{BP17} may lead us to global specifications of sharing in
choreographies.  And the asynchronous interpretation of cut reductions
in \cite{DCPT12} might give us an asynchronous implementation of
choreographies in MCC. We leave an exploration of these extensions to
future work. Hopefully, the shared foundations of linear logic will
make it possible to build on these pre-existing technical developments
following the same idea of choreographies as cut reductions.


\bibliographystyle{abbrv}
\bibliography{biblio}


\end{document}